\newcommand{\comment}[1]{{\color{red} #1}}
\newcommand{\rowsum}[1]{{\overset{\rightarrow}{#1}}}
\newcommand{\columnsum}[1]{{\downarrow\!#1}}
\newcommand{\true}{\ensuremath{\textnormal{TRUE}}}
\newcommand{\false}{\ensuremath{\textnormal{FALSE}}}
\newcommand{\bvec}[1]{\mathbf{#1}}
\newcounter{constr} 
\newcommand{\constr}[1]{\noindent%
 \refstepcounter{constr}\theconstr #1}
\title{Using ILP/SAT to determine pathwidth, visibility
representations, and other grid-based graph drawings\thanks{T. Biedl
is supported
by NSERC, M. Nöllenburg is supported by the Concept for the Future of
KIT under grant YIG 10-209.}}
\author{Therese Biedl\inst{1}
\and Thomas Bl\"asius\inst{2}
\and Benjamin Niedermann\inst{2}%
\and Martin~N\"{o}llenburg\inst{2}%
\and Roman Prutkin\inst{2}%
\and Ignaz Rutter\inst{2}%
}
\authorrunning{Biedl \and Bl\"asius \and Niedermann \and
N\"{o}llenburg \and Prutkin \and Rutter}
\titlerunning{ILP/SAT formulation for grid-based graph
representations}
\institute{David R.~Cheriton School of Computer
Science, University of Waterloo, Canada
\and
Institute of Theoretical Informatics, Karlsruhe Institute of
Technology, Germany
}
\begin{document}

\maketitle
\begin{abstract}
We present a simple and versatile formulation of grid-based graph
representation problems as an integer linear program (ILP) and a
corresponding SAT instance. In a grid-based representation vertices
and edges correspond to axis-parallel boxes on an underlying integer
grid; boxes can be further constrained in their shapes and
interactions by additional problem-specific constraints. We describe a
general $d$-dimensional model for grid representation problems. This
model can be used to solve a variety of  NP-hard graph problems,
including pathwidth, bandwidth, optimum $st$-orientation, 
area-minimal (bar-$k$) visibility representation, boxicity-$k$ graphs
and others. We implemented SAT-models for all of the above problems
and evaluated them on the Rome graphs collection. The experiments show
that our model successfully solves NP-hard problems within few minutes
on small to medium-size Rome graphs.
\end{abstract}

\section{Introduction}
Integer linear programming (ILP) and Boolean satisfiability testing
(SAT) are indispensable and widely used tools in solving many hard
combinatorial optimization and decision problems in practical
applications~\cite{bhmw-hs-09,cbd-aip-10}. In graph drawing,
especially for planar graphs, these methods are not frequently
applied. A few notable exceptions
are crossing
minimization~\cite{cmb-aecm-08,bcegjk-bacnp-08,jm-2scmpeha-97,gsm-okpcm-11}, %
orthogonal graph drawing with vertex and
edge labels~\cite{bdln-odgwvel-05} and metro-map
layout~\cite{nw-dlhqm-11}. Recent work by Chimani et
al.~\cite{cz-upt-13} uses SAT formulations for testing upward
planarity. All these approaches have in common that they exploit
problem-specific properties to derive small and efficiently solvable
models, but they do not generalize to larger classes of problems. 

In this paper we propose a generic ILP model that is flexible enough
to capture a large variety of different grid-based graph layout
problems, both polynomially-solvable and NP-complete. We demonstrate
this broad applicability by adapting the base model to six different
NP-complete example problems: pathwidth, bandwidth, optimum
$st$-orientation, minimum area bar- and bar $k$-visibility representation, and boxicity-$k$ testing. 
For minimum-area visibility representations and boxicity this is, to the best of our knowledge, the first implementation of an exact solution
method.
Of course this flexibility comes at the cost of losing some of the
efficiency of more specific approaches. Our goal, however, is not to
achieve maximal performance for a specific problem, but to provide an
easy-to-adapt solution method for a larger class of problems, which
allows quick and simple prototyping for instances that are not too
large. Our ILP models can be easily translated into equivalent SAT
formulations, which exhibit better performance in the implementation
than the ILP models themselves. We illustrate the usefulness of our
approach by an experimental evaluation that applies our generic model
to the above six NP-complete problems using the well-known Rome
graphs~\cite{rome} as a benchmark set. Our evaluation shows that,
depending on the problem, our model can solve small to medium-size
instances (sizes varying from about 25 vertices and edges for bar-$1$
visibility testing up to more than 250 vertices and edges, i.e., all
Rome graphs, for optimum $st$-orientation) to optimality within a few
minutes. 
In Section~\ref{sec:genericmodel} we introduce generic grid-based
graph representations and formulate an ILP model for $d$-dimensional
integer grids. We demonstrate how this model can be adapted to six
concrete one-, two- and $d$-dimensional grid-based graph layout
problems in Sections~\ref{se:problems:1dim} and~\ref{sec:higherdim}.
In Section~\ref{sec:exp} we evaluate our implementations and report
experimental results. The implementation is available from
\url{i11www.iti.kit.edu/gdsat}.

\section{Generic Model for Grid-Based Graph Representations}\label{sec:genericmodel}

In this section we explain how to express $d$-dimensional boxes in a
$d$-dimensional integer grid as constraints of an ILP or a SAT
instance. In the subsequent sections we use these boxes as
basic elements for representing vertices and edges in problem-specific ILP and SAT models.
We need a simple observation that shows that we can restrict ourselves
to boxes in integer grids.

\newcommand{\lemTransformToInteger}{
Any set $I$ of $n$ boxes in $\mathbb R^d$ can be transformed into another set $I'$ of~$n$ closed boxes on 
the integer grid %
$\{ 1, \ldots, n\}^d$ 
such that two boxes intersect in $I$ if and only if they intersect in $I'$.
}
\begin{lemma}
\label{lem:transformToInteger}
\label{lem:gtransformToInteger}
\lemTransformToInteger
\end{lemma}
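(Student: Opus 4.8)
The plan is to reduce the $d$-dimensional claim to a one-dimensional one. An axis-parallel box is the Cartesian product $\prod_{k=1}^{d}[a_k,b_k]$ of $d$ closed intervals, one on each coordinate axis, and two such boxes intersect if and only if their projections onto every coordinate axis intersect. So it suffices to prove the statement for $d=1$: any $n$ closed intervals on $\mathbb{R}$ can be replaced by $n$ closed intervals with endpoints in $\{1,\dots,n\}$ realizing exactly the same intersecting pairs. Given that, I would apply this one-dimensional transformation independently to the $n$ projections in each of the $d$ coordinates and reassemble the resulting intervals into boxes; by the coordinatewise intersection criterion the resulting set $I'\subseteq\{1,\dots,n\}^d$ has the required property.

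For the one-dimensional case I would give an explicit construction rather than an abstract compression argument, so that the grid size comes out as exactly $n$. Write the input intervals as $[a_i,b_i]$ with $a_i\le b_i$ for $i=1,\dots,n$, and set
\[
a'_i \;=\; \bigl|\{\, j : b_j < a_i \,\}\bigr| + 1,
\qquad
b'_i \;=\; \bigl|\{\, j : b_j \le b_i \,\}\bigr|.
\]
A short check shows $a'_i,b'_i\in\{1,\dots,n\}$: index $i$ is counted in the set defining $b'_i$ but not in the one defining $a'_i$ (because $b_i\ge a_i$). The heart of the proof is the equivalence $a_i\le b_j \iff a'_i\le b'_j$, for all $i,j$: if $a_i\le b_j$ then $\{\,j':b_{j'}<a_i\,\}$ is a \emph{proper} subset of $\{\,j':b_{j'}\le b_j\,\}$ (it is contained in it, and $j$ lies in the second set but not the first), so $a'_i\le b'_j$; conversely if $a_i>b_j$ then $\{\,j':b_{j'}\le b_j\,\}\subseteq\{\,j':b_{j'}<a_i\,\}$, so $a'_i>b'_j$. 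Taking $i=j$ gives $a'_i\le b'_i$, so the $[a'_i,b'_i]$ are legitimate closed intervals. Since two closed intervals $[a_i,b_i],[a_j,b_j]$ meet exactly when $a_i\le b_j$ and $a_j\le b_i$, applying the equivalence to the ordered pairs $(i,j)$ and $(j,i)$ shows $[a'_i,b'_i]$ and $[a'_j,b'_j]$ meet exactly when $[a_i,b_i]$ and $[a_j,b_j]$ do.

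I do not expect a real obstacle here; the only things to watch are bookkeeping. One must work with closed intervals throughout so that ``touching'' counts as intersecting, and ties among endpoints must be handled consistently --- the counting formula does this automatically since it only ever compares a left endpoint against right endpoints. Degenerate boxes (intervals collapsing to a point, i.e.\ lower-dimensional boxes) are allowed and are handled by the same formula with $a_i=b_i$. Assembling the $d$ independent per-coordinate transformations then finishes the proof.
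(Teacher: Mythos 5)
Your proof is correct for closed input boxes and is, at its core, the same argument as the paper's: reduce to $d=1$ via the coordinatewise intersection criterion, then relabel each interval endpoint by its rank among the $n$ right endpoints. Your counting formulas $a'_i=|\{j:b_j<a_i\}|+1$ and $b'_i=|\{j:b_j\le b_i\}|$ are exactly the closed-form version of the paper's ``sort the $2n$ endpoints, then contract each maximal run of left endpoints together with the following right endpoint to a single integer,'' and your direct verification of $a_i\le b_j\iff a'_i\le b'_j$ is clean and complete, including the range check and the degenerate case $a_i=b_i$. The one place where you prove less than the paper does: the lemma only requires the \emph{output} boxes to be closed, and the paper's proof deliberately makes no assumption about whether the input intervals are open or closed at either end (its four-way tie-breaking rule --- open-right, closed-left, closed-right, open-left --- exists precisely to handle touching intervals whose intersection behaviour depends on which ends are open). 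Your formula silently assumes all inputs are closed; extending it to half-open or open inputs would require case-splitting the comparison $a_i\le b_j$ according to the endpoint types, which is exactly the bookkeeping the paper's tie-breaking encodes. If the intended reading of ``boxes'' is ``closed boxes'' (as it is everywhere else in the paper), your proof is complete as written; otherwise you should state the closedness assumption or add the tie-breaking cases.
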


\begin{proof}
Let $I_1,\dots,I_n$ be a set of $d$-dimensional intervals.
Then we prove that there exists a set $I'_1,\dots,I'_n$ of
$d$-dimensional intervals such that:
\begin{itemize}
\item For all $i$, $I'_i= [b_i^1,e_i^1] \times \dots \times
[b_i^d,e_i^d]$ for some
  $b_i^1,e_i^1,\dots,b_i^d,e_i^d\in \{1,\dots,n\}$.  Put differently,
  $I'_i$ has integral coordinates in the range $\{1,\dots,n\}$,
  and it is closed at both ends.
\item $I_i\cap I_j$ if and only if $I'_i\cap I'_j$.
\end{itemize}

It suffices to show this for 1-dimensional intervals; we can then
transform each dimension of the intervals separately to achieve the
results for arbitrary dimensions.
  
For 1-dimensional intervals, presume that $I_i=\{s_i,t_i\}$, where we
make no assumption over whether the ends of $I_i$ are open or closed.
We first create a sorted orientation of the $2n$ endpoints of these
intervals.  We sort them by their coordinate, and in case of a tie
take 
first right endpoints where the interval is open, 
then left endpoints where the interval is closed,
then right endpoints where the interval is closed 
and then left endpoints where the interval is open.  
Presume that $\sigma$ describes this order, i.e., for any endpoint $p$
$\sigma(p)$ gives the index of $p$ in this sorted order.
One easily verifies that $\{\sigma(s_i),\sigma(t_i)\}$ intersects 
$\{\sigma(s_j),\sigma(t_j)\}$
if and only if $I_i$ intersects $I_j$.  Here it does not even matter
whether we make the new intervals open or closed, since all endpoints
are distinct integers.

Now compact $\sigma$ by not using a new integer whenever we have
an endpoint of an interval.  More precisely, split $\sigma$ into
maximal subsequences such that each subsequence consists of multiple
(possibly none) left endpoints of intervals, followed by one right
endpoint of an interval.
  There are 
$n$ such subsequences, since there are $n$ right endpoints.
Enumerate the subsequences in order, and let $\sigma'(p)$ be
the number assigned to the subsequence that contains $\sigma(p)$,
for any endpoint $p$ of an interval.
Now for $I_i=\{s_i,t_i\}$ 
define $I'_i$ to be $\left[\sigma'(s_i),\sigma'(t_i)\right]$.  We
claim
that this satisfies the conditions.  Clearly the endpoints of the
intervals
are integers in the range $1,\dots,n$, so we only must 
argue that intersections are unchanged.
This held when going over from $\{s_i,t_i\}$ to
$\{\sigma(s_i),\sigma(t_i)\}$.
But going over from $\sigma$ to $\sigma'$, we changed the relative
order 
of endpoints only by contracting a number
of left endpoints, followed by one right endpoint.  Since 
$I'_i$ is closed at both ends, this does not change intersections.
\qed
\end{proof}

\subsection{Integer Linear Programming Model}\label{sec:ilpmodel}
We will describe our model in the general case for a $d$-dimensional
integer grid, where $d \ge 1$. Let $\mathcal R^d = [1,U_1] \times
\ldots \times [1, U_d]$ be a bounded $d$-dimensional integer grid,
where $[A,B]$ denotes the set of integers $\{A, A+1, \dots, B-1,B\}$.
In a \emph{grid-based graph representation} vertices and/or edges are
represented as $d$-dimensional boxes in $\mathcal{R}^d$. A \emph{grid
box} $R$ in $\mathcal R^d$ is a subset $[s_1,t_1] \times \ldots \times
[s_d, t_d]$ of $\mathcal R^d$, where $1 \le s_k \le t_k \le U_k$ for
all $1 \le k \le d$. In the following we describe a set of ILP
constraints that together create a non-empty box for some object  $v$.
 We denote this ILP as $\mathcal B(d)$.

We first extend $\mathcal R^d$ by a margin of dummy points to $\bar{\mathcal R}^d = [0,U_1+1] \times \ldots \times [0, U_d+1]$.
We use three sets of binary variables:
\begin{align}
  \label{eq:x} x_\bvec{i}(v) &\in \{0,1\} && \forall \bvec{i} \in \bar{\mathcal R}^d\\
  \label{eq:b} b_i^{k}(v) &\in \{0,1\} && \forall 1\le k \le d \text{ and } 1\le i \le U_k\\
  \label{eq:e} e_i^{k}(v) &\in \{0,1\} && \forall 1\le k \le d \text{ and } 1\le i \le U_k
\end{align}
The variables $x_\bvec{i}(v)$ indicate whether grid point $\bvec{i}$ belongs to the box representing~$v$ ($x_\bvec{i}(v)=1$) or not ($x_\bvec{i}(v)=0$). Variables $b_i^{k}(v)$ and $e_i^{k}(v)$ indicate whether the box of $v$ may start or end at position $i$ in dimension $k$. We use~$\bvec{i}[k]$ to denote the $k$-th coordinate of grid point $\bvec{i} \in \mathcal R^d$ and $\bvec{1_k} = (0,\dots,0,1,0,\dots,0)$ to denote the $k$-th $d$-dimensional unit vector. If $d=1$ we will drop the dimension index of the variables to simplify the notation.
The following constraints model a box in $\mathcal R^d$ (see Fig.~\ref{fig:example2d} for an example): %
\begin{align} 
\label{eq:margin} x_\bvec{i}(v) &= 0 && \forall \bvec{i} \in \bar{\mathcal R}^d \setminus \mathcal R^d\\
\label{eq:nonempty} \sum_{\bvec{i}\in \bar{\mathcal R}^d } x_\bvec{i}(v) & \ge 1 &\\
\label{eq:oneb} \sum_{{i \in [1,U_k]}} b_{i}^k(v) & = 1 &&\forall 1
\le k \le d\\
\label{eq:onee} \sum_{{i \in [1,U_k]}} e_{i}^k(v) & = 1 && \forall 1
\le k \le d
\end{align}
\begin{align}
\label{eq:start} x_\bvec{i-1_k}(v) + b_{\bvec{i}[k]}^k(v) &\ge 
x_\bvec{i}(v) && \forall \bvec{i} \in \mathcal R^d \text{ and } 1 \le
k \le d\\
\label{eq:end} x_\bvec{i}(v) &\le x_\bvec{i+1_k}(v) + e_{\bvec{i}[k]}^k(v) && \forall \bvec{i} \in \mathcal R^d \text{ and } 1 \le k \le d
\end{align}

Constraint~\eqref{eq:margin} creates a margin of zeroes
around~$\mathcal R^d$.
Constraint~\eqref{eq:nonempty} ensures that the shape representing~$v$
is non-empty, and constraints~\eqref{eq:oneb} and~\eqref{eq:onee}
provide exactly one start and end position in each dimension.
Finally, due to constraints~\eqref{eq:start} and~\eqref{eq:end} each
grid point inside the specified bounds belongs to~$v$ and all other
points don't.

\begin{figure}[tb]
	\centering
		\includegraphics[scale=1]{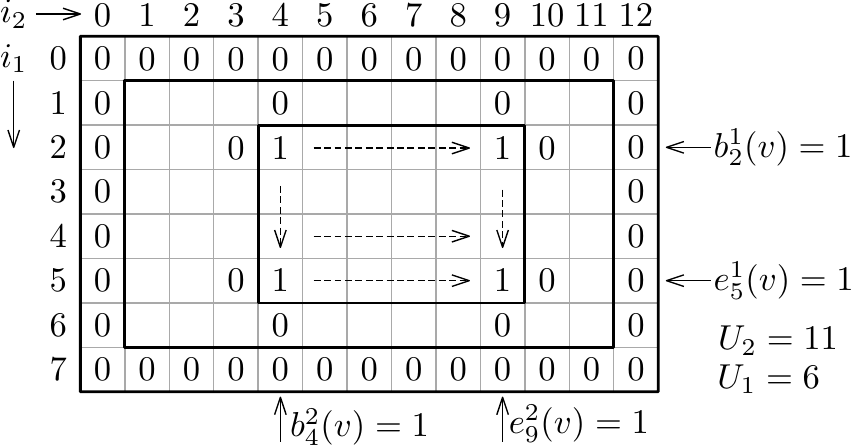}
	\caption{\small Example of a 2-dimensional $8\times 13$ grid
$\bar{\mathcal R}^d$ with a 
	$6\times 4$ grid box and the corresponding variable assignments.}
	\label{fig:example2d}
	\vspace{-2ex}
\end{figure}

\newcommand{\lemboxtext}{The ILP $\mathcal B(d)$ defined by constraints~(\ref{eq:x})--(\ref{eq:end}) correctly models all non-empty grid boxes in $\mathcal R^d$.
}

\begin{lemma}\label{lem:correctbox}
\lemboxtext
\end{lemma}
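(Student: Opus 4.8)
The plan is to prove the two directions of "correctly models" separately: first, that every assignment of the variables satisfying constraints~(\ref{eq:x})--(\ref{eq:end}) describes a non-empty grid box in $\mathcal R^d$; and second, that every non-empty grid box in $\mathcal R^d$ admits such an assignment. Since the constraints are essentially the product (over dimensions) of a one-dimensional gadget, I expect the core argument to be one-dimensional, with the $d$-dimensional case following by taking Cartesian products; so I would first treat $d=1$ carefully and then remark that the general case is obtained coordinate-wise.

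\textbf{From a feasible assignment to a box.} Fix a feasible assignment and fix a dimension $k$. By~(\ref{eq:oneb}) and~(\ref{eq:onee}) there are unique indices $s_k, t_k \in [1,U_k]$ with $b^k_{s_k}(v)=1$ and $e^k_{t_k}(v)=1$. I would first show, using~(\ref{eq:margin}) together with~(\ref{eq:start}) and~(\ref{eq:end}) applied along a single grid line in direction $\bvec{1_k}$, that the set of points with $x_\bvec{i}(v)=1$ on that line is "interval-like": walking from the margin point $\bvec{i}$ with $\bvec{i}[k]=0$ (where $x=0$ by~(\ref{eq:margin})) in increasing $k$-direction, constraint~(\ref{eq:start}) forces that the first time $x$ jumps from $0$ to $1$ must happen at coordinate $i$ with $b^k_i(v)=1$, i.e.\ at $i=s_k$; symmetrically~(\ref{eq:end}) forces the last $1$-to-$0$ drop to occur at $t_k$. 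Hence on each line the $1$'s form a contiguous block and, if nonempty, that block starts exactly at $s_k$ and ends exactly at $t_k$ (in particular $s_k \le t_k$ whenever the line is nonempty). Combining this over all dimensions and using~(\ref{eq:nonempty}) (so at least one line is nonempty, hence all are, by a short propagation argument), I conclude that $\{\bvec i : x_\bvec i(v)=1\}$ is exactly the box $[s_1,t_1]\times\cdots\times[s_d,t_d]$, which is a non-empty grid box since $1\le s_k\le t_k\le U_k$.

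\textbf{From a box to a feasible assignment.} Conversely, given a non-empty grid box $R=[s_1,t_1]\times\cdots\times[s_d,t_d]$ with $1\le s_k\le t_k\le U_k$, I set $x_\bvec i(v)=1$ iff $\bvec i\in R$, $b^k_i(v)=1$ iff $i=s_k$, and $e^k_i(v)=1$ iff $i=t_k$. Then I would simply verify each constraint in turn: (\ref{eq:margin}) holds since $R\subseteq\mathcal R^d$; (\ref{eq:nonempty}) since $R\neq\emptyset$; (\ref{eq:oneb}), (\ref{eq:onee}) by construction; and (\ref{eq:start}), (\ref{eq:end}) by a straightforward case distinction on whether $\bvec i\in R$ and whether $\bvec i\pm\bvec 1_k\in R$ (the only slightly delicate cases are the boundary ones $\bvec i[k]=s_k$ and $\bvec i[k]=t_k$, which are exactly where the $b$- and $e$-variables are switched on).

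\textbf{Main obstacle.} The routine verifications are not the difficulty; the one step requiring care is the "interval-like" argument in the first direction — namely, arguing rigorously that constraints~(\ref{eq:start})--(\ref{eq:end}), which are only local (relating neighbouring grid points), globally force the $1$-set on each line to be a single contiguous run whose endpoints coincide with the unique $b$- and $e$-positions, rather than several runs or a run misaligned with $s_k,t_k$. The cleanest way I see to do this is an induction along each grid line starting from the margin: since $b^k$ and $e^k$ each sum to $1$, there is only one position at which~(\ref{eq:start}) permits a $0\to1$ transition and only one at which~(\ref{eq:end}) permits a $1\to0$ transition, and the margin zeros pin down the start of the walk — so any feasible $x$-pattern on the line is forced. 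Once this is established for one dimension it propagates to all, and the product structure gives the box; the $d>1$ reduction to $d=1$ is then immediate because constraints~(\ref{eq:start})--(\ref{eq:end}) act independently along each axis-parallel line.
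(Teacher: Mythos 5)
Your proposal is correct and follows essentially the same route as the paper's proof: a direct constraint-by-constraint verification for the box-to-assignment direction, and for the converse a line-by-line argument showing that the uniqueness of the $b^k$- and $e^k$-positions together with the zero margin forces the $1$-set on each axis-parallel line to be empty or exactly the segment $[s_k,t_k]$, whence the whole $1$-set is the box. The only cosmetic difference is that you phrase the line argument as counting $0\to1$ and $1\to0$ transitions while the paper states the same conclusion about each row $W_{\mathbf{i}}^k$ directly.
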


\begin{proof}
  Let $R = [s_1,t_1] \times \ldots \times [s_d, t_d]$ be a non-empty
grid box in $\mathcal R^d$. We initially set all variables to the
default value of zero. For every $\bvec{i} \in R$ we set
$x_\bvec{i}(R) = 1$. Moreover, we set $b_{s_k}^k(R) = 1$ and
$e_{t_k}^k(R) = 1$ for all $1 \le k \le d$. We claim that this
assignment satisfies all constraints. Constraints
(\ref{eq:x})--(\ref{eq:e}) are obviously satisfied, as well as 
constraints~(\ref{eq:margin}) and~(\ref{eq:nonempty}) since $R$ is
nonempty but does not intersect the margin $\bar{\mathcal R}^d
\setminus \mathcal R^d$. For every $1 \le k \le d$ we set exactly one
variable $b_i^k(R) = 1$ and $e_j^k(R)=1$, namely for $i=s_k$ and
$j=t_k$; this satisfies constraints~(\ref{eq:oneb})
and~(\ref{eq:onee}). Constraints~(\ref{eq:start}) and~(\ref{eq:end})
are trivially satisfied if $x_\bvec{i}(R) = x_\bvec{i+1_k}(R)$  for
two neighboring grid points in dimension $k$. Otherwise, if
$x_\bvec{i-1_k}(R) = 0$ and $x_\bvec{i}(R) = 1$, then
$b_{\bvec{i}[k]}^k$ must equal $1$, and if $x_\bvec{i}(R) = 1$ and
$x_\bvec{i+1_k}(R) = 0$, then $e_{\bvec{i}[k]}^k$ must equal $1$.
  Let $W_\bvec{i}^k = \{\bvec{j} \in \bar{\mathcal R^d} \mid \bvec{j}
= \bvec{i} + (\lambda - \bvec{i}[k]) \cdot \bvec{1_k}, \, 0 \le
\lambda \le U_k+1  \}$ be a row of $\mathcal R^d$ in the $k$-th
dimension indexed by a grid point~$\bvec{i}$. By $W_\bvec{i}^k[l]$ we
denote the $l$-th point $\bvec{i} + (l - \bvec{i}[k]) \cdot
\bvec{1_k}$ in row $W_\bvec{i}^k$. We know that   $x_\bvec{j}(R)=0$
for $\bvec{j} \in \{W_\bvec{i}^k[0], W_\bvec{i}^k[U_k+1]\}$. If
$W_\bvec{i}^k \cap R = \emptyset$, then all indicator variables for
$W_\bvec{i}^k$ are equal to zero and constraints~(\ref{eq:start})
and~(\ref{eq:end}) are satisfied. Otherwise, the indicator variables
for row $W_\bvec{i}^k$ contain a consecutive sequence of $1$'s for
points $W_\bvec{i}^k[l] \in R$ with $s_k \le l \le t_k$. But since
$b_{s_k}^k(R) = 1$ and $e_{t_k}^k(R) = 1$ the
constraints~(\ref{eq:start}) and~(\ref{eq:end}) are also satisfied for
$\bvec{j} = W_\bvec{i}^k[s_k]$ and $\bvec{j} = W_\bvec{i}^k[t_k]$.

Now consider a valid variable assignment according to
constraints~(\ref{eq:x})--(\ref{eq:end}) and define $R= \{\bvec{i} \in
\mathcal R^d \mid x_\bvec{i} = 1\}$. By constraints~(\ref{eq:margin})
and~(\ref{eq:nonempty}) $R$ contains at least one point. Let $1\le
k\le d$ be a dimension of $\mathcal R^d$ and let $W_\bvec{i}^k$ be any
row in the $k$-th dimension. By constraints~(\ref{eq:oneb})
and~(\ref{eq:onee}) there is exactly one coordinate $s_k$, where
$b_{s_k}^k=1$ and one coordinate $t_k$, where $e_{t_k}^k=1$. Thus by
constraints~(\ref{eq:start}) and~(\ref{eq:end}) $W_\bvec{i}^k \cap R$
is either empty or a single interval of consecutive points between
$W_\bvec{i}^k[s_k]$ and $W_\bvec{i}^k[t_k]$. Since this is true for
any $k$, the set $R$ must be a (non-empty) grid box. \qed
\end{proof}

Our example ILP models in Sections~\ref{se:problems:1dim}
and~\ref{sec:higherdim} extend ILP $\mathcal B(d)$ by additional
constraints controlling additional properties of vertex and edge
boxes. For instance, boxes can be easily constrained to be single
points, to be horizontal or vertical line segments, to intersect if
and only if they are incident or adjacent in $G$, to meet in endpoints
etc.
The definition of an objective function for the ILP depends on the
specific problem at hand and will be discussed in the problem
sections.
\subsection{Translating the ILP model into a SAT
model}\label{sec:satmodel}
In this section we explain shortly how ILP~$\mathcal B(d)$ can be
translated into an equivalent SAT formulation with better
practical performance. The transformation of $\mathcal B(d)$
(including later problem-specific extensions) into a SAT instance,
i.e., a set of Boolean clauses, is straightforward. Let $k,c \in
\mathbb N \setminus \{ 0 \}$, $k > c$, be positive integers and $y_1,
\ldots, y_k$, $z$ be binary variables. Then most of our ILP
constraints belong to one of the following four types:
(i)~$\sum_{i=1}^k y_i \geq z$, 
(ii)~\mbox{$\sum_{i=1}^k y_i \leq c$,} 
(iii)~$\sum_{i=1}^k y_i \geq c$,
(iv)~$\sum_{i=1}^k y_i = c$. 

We translate a type-(i) constraint
into the clause~$y_1 \vee \ldots \vee y_k \vee \neg z$.
For a type-(ii) constraint we consider each tuple of $c+1$ pairwise
distinct indices $i_1, \ldots, i_{c+1}$ and
 add the clause~$\neg y_{i_1} \vee \ldots \vee \neg y_{i_c} \vee \neg
y_{i_{c+1}}$. 
This gives us $\binom{k}{c+1}$ clauses of size~$c+1$.
A \mbox{type-(iii)} constraint is equivalent to~$\sum_{i=1}^k (1 -
y_i) \leq k-c$,
providing $\binom{k}{k-c+1}$ clauses of size~$k-c+1$.
A type-(iv) constraint 
is described as a type-(ii) and a type-(iii) constraint and thus needs
$\binom{k}{c+1}$ clauses of size~$c+1$
and $\binom{k}{k-c+1}$ clauses of size~$k-c+1$.

\section{One-dimensional Problems}
\label{se:problems:1dim}
In the following, let $G = (V,E)$ be an undirected graph with~$|V|=n$ and~$|E|=m$. 
One-dimensional grid-based graph representations can be used to model vertices as intersecting intervals (one-dimensional boxes) or as disjoint points that induce a certain vertex order. We present ILP models for three such problems.

\subsection{Pathwidth}

The pathwidth of a graph $G$ is a well-known graph
parameter with many equivalent definitions. We use the definition via the
smallest clique size of an interval supergraph.
More precisely, a graph is an {\em interval graph} if it can be
represented as intersection graph of 1-dimensional intervals.  A
graph $G$ has {\em pathwidth} $pw(G)\leq p$ if there exists an
interval graph $H$ that contains $G$ as a subgraph and for which all
cliques have size at most $p+1$.
It is NP-hard to compute the pathwidth of an arbitrary graph and even
hard to approximate it~\cite{BodlaenderGKH91}. There are fixed-parameter algorithms for computing the pathwidth,
e.g.~\cite{BodlaenderK96}, however, we are not aware of any
implementations of these algorithms.  The only available implementations are exponential-time algorithms, 
e.g., in
sage\footnote{www.sagemath.org}. %

\begin{problem}[Pathwidth]\label{pb:pw}
	Given a graph $G=(V,E)$, determine the pathwidth of~$G$, i.e., the smallest integer $p$ so that $pw(G) \le p$.
\end{problem}

There is an interesting connection between pathwidth and planar graph
drawings of small height.  Any planar graph that has a planar
drawing of height $h$ has pathwidth at most $h$~\cite{FLW03}.
 Also, pathwidth is a crucial ingredient in
testing in polynomial time whether a graph has a planar drawing of
height $h$~\cite{DFK+08}.

We create a one-dimensional grid representation of $G$, in which every
vertex is an interval and every edge forces the two vertex intervals
to intersect. The objective is to minimize the maximum number of
intervals that intersect in any given point.
We use the ILP $\mathcal B(1)$ for a grid $\mathcal R = [1,n]$, which
already assigns a non-empty interval to each vertex $v \in V$. We add
binary variables for the edges of $G$, a variable $p \in \mathbb N$
representing the pathwidth of $G$, and a set of additional
constraints as follows.
\begin{align}
  \label{eq:pw:edgevar} x_{i}(e) &\in \{0,1\} &&&& \forall {i} \in {\mathcal R} \text{ and } e \in E\\
  \label{eq:pw:oneedge} \sum_{i \in \mathcal R} x_{i}(e) &\ge 1 &&&& \forall e \in E\\
  \label{eq:pw:edge} x_i(uv) &\le x_i(u), & x_i(uv) &\le x_i(v) && \forall uv \in E\\
  \label{eq:pw:pw} \sum_{v \in V} x_i(v) &\le p+1 &&&& \forall i \in \mathcal R 
\end{align}
Our objective function is to minimize the value of $p$ subject to the above constraints.

It is easy to see that every edge must be represented by some grid
point (constraint~(\ref{eq:pw:oneedge})), and can only use those grid
points, where the two end vertices intersect
(constraint~(\ref{eq:pw:edge})). Hence the intervals of vertices
define some interval graph $H$ that is a supergraph of $G$. 
Constraint (13) enforces that at most $p+1$ intervals meet in any
point, which by Helly's property means that $H$ has clique-size at
most $p+1$. So $G$ has pathwidth at most $p$. 
By minimizing $p$ we obtain the desired result. 
In our implementation we translate the ILP into a SAT instance
using the rules given in Section~\ref{sec:satmodel}. 
We test satisfiability
for fixed values of $p$, starting with $p=1$ and increasing it
incrementally until a solution is found.
\begin{theorem}
There exists an ILP/SAT formulation with $O(n(n+m))$ variables
and $O(n(n+m))$ constraints / $O(n^3 + n \binom{n}{p+2})$ clauses of
maximum size $n$ that has a solution of value $\leq p$ if and only
if $G$ has pathwidth $\leq p$.
\end{theorem}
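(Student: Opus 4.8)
The plan is to verify the three claims of the theorem—correctness, variable/constraint count, and clause count—more or less independently, leaning on Lemma~\ref{lem:correctbox} and the discussion immediately preceding the theorem statement.

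\textbf{Correctness.} First I would argue the equivalence ``ILP/SAT has a solution of value $\le p$ $\iff$ $pw(G)\le p$'' in both directions. For the ``$\Rightarrow$'' direction I take a feasible assignment; by Lemma~\ref{lem:correctbox} the variables $x_i(v)$ describe, for each $v\in V$, a genuine non-empty interval $I_v\subseteq[1,n]$. Constraints~\eqref{eq:pw:oneedge} and~\eqref{eq:pw:edge} force, for every $uv\in E$, some grid point $i$ with $x_i(uv)=1$, hence $x_i(u)=x_i(v)=1$, so $I_u\cap I_v\neq\emptyset$. Thus the intersection graph $H$ of $\{I_v\}_{v\in V}$ is an interval graph containing $G$ as a subgraph. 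Constraint~\eqref{eq:pw:pw} says at most $p+1$ intervals contain any fixed grid point; since the $I_v$ are intervals on a line, Helly's property for intervals implies every clique of $H$ has a common point, hence size $\le p+1$, so $pw(G)\le p$. For ``$\Leftarrow$'' I start from an interval supergraph $H$ of $G$ with all cliques of size $\le p+1$; by Lemma~\ref{lem:transformToInteger} I may assume the interval representation lives on the integer grid $\{1,\dots,n\}$. Setting $x_i(v)=1$ iff $i\in I_v$, choosing $b^1,e^1$ at the endpoints of $I_v$, and for each edge $uv$ picking one point $i\in I_u\cap I_v$ and setting $x_i(uv)=1$, gives a feasible assignment of value $p$: Lemma~\ref{lem:correctbox} handles~\eqref{eq:x}--\eqref{eq:end}, and~\eqref{eq:pw:oneedge}--\eqref{eq:pw:pw} are immediate, the last one because the max number of intervals over a point equals the max clique size of $H$, again by Helly.

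\textbf{Counts.} This is a bookkeeping step. The grid is $\mathcal R=[1,n]$, so $d=1$, $U_1=n$. From $\mathcal B(1)$: the $x_\bvec i(v)$ contribute $O(n)$ variables per vertex, i.e.\ $O(n^2)$; the $b_i,e_i$ add $O(n^2)$ more. The edge variables $x_i(e)$ from~\eqref{eq:pw:edgevar} give $O(nm)$, plus the single variable $p$. Total $O(n^2+nm)=O(n(n+m))$. For constraints: \eqref{eq:margin}--\eqref{eq:onee} are $O(n)$ per vertex; \eqref{eq:start}--\eqref{eq:end} are $O(n)$ per vertex; \eqref{eq:pw:oneedge} is one per edge; \eqref{eq:pw:edge} is $O(1)$ per edge; \eqref{eq:pw:pw} is $O(n)$ total. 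Summing over $n$ vertices and $m$ edges gives $O(n^2+nm)=O(n(n+m))$ constraints.

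\textbf{Clause count and size.} Here I apply the translation rules of Section~\ref{sec:satmodel}. With $p$ fixed, constraint~\eqref{eq:pw:pw} is a type-(ii) constraint $\sum_{v\in V}x_i(v)\le p+1$ with $k=n$, $c=p+1$; per grid point this yields $\binom{n}{p+2}$ clauses of size $p+2\le n$, and there are $n$ such points, giving the $n\binom{n}{p+2}$ term. The remaining constraints are type-(i), type-(iii) with small $c$, or type-(iv) with $c=1$: \eqref{eq:oneb}/\eqref{eq:onee} are ``$=1$'' over $n$ variables (one per vertex per dimension), contributing $O(\binom{n}{2})=O(n^2)$ clauses each, summed over $n$ vertices gives $O(n^3)$; \eqref{eq:start}, \eqref{eq:end}, \eqref{eq:pw:oneedge}, \eqref{eq:pw:edge}, \eqref{eq:nonempty} are single clauses (type-(i) or type-(iii) with $c=1$), totalling $O(n^2+nm)=O(n^3)$ in the worst case; \eqref{eq:margin} fixes variables. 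So all non-\eqref{eq:pw:pw} clauses are $O(n^3)$, and every clause has size at most $\max\{n,\,p+2\}=n$ (since a meaningful instance has $p+1\le n$). This yields the stated $O(n^3+n\binom{n}{p+2})$ clauses of maximum size $n$.

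The main obstacle is being careful with the Helly-property invocation in both directions of the correctness argument: one must note that $H$ being an interval graph means \emph{some} interval representation exists, and that $pw(G)\le p$ is equivalent to the existence of such an $H$ whose clique number (equivalently, by Helly, whose maximum pointwise interval-overlap) is $\le p+1$—and that Lemma~\ref{lem:transformToInteger} lets us move any such representation onto $\{1,\dots,n\}$ without disturbing intersections, hence without disturbing the overlap count. Everything else is routine substitution into the earlier lemmas and the clause-counting table.
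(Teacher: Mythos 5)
Your proposal is correct and follows essentially the same route as the paper: the forward direction is exactly the paper's Helly-property argument built on Lemma~\ref{lem:correctbox}, the reverse direction uses Lemma~\ref{lem:transformToInteger} to discretize an interval supergraph onto $\{1,\dots,n\}$ as the paper intends, and the variable/constraint/clause counts match the paper's bookkeeping via the Section~\ref{sec:satmodel} translation rules.
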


\par
With some easy modifications, the above ILP can be used 
for testing whether a graph is a (proper) interval graph. Section~\ref{sec:boxicity} shows that boxicity-$d$ graphs, the $d$-dimensional generalization of interval graphs, can also be recognized by our ILP.

\subsection{Bandwidth}\label{sec:bandwidth}
The bandwidth of a graph $G$ with $n$ vertices is another classic graph parameter, which is NP-hard to compute~\cite{ccdg-bpgms-82}; due to the practical importance of the problem there are also a few approaches to find exact solutions to the bandwidth minimization problem.
For example,~\cite{bandwidth-exact} and~\cite{mcp-bbambm-08} use the
branch-and-bound technique combined with various heuristics. We
present a solution that can be easily described using our general
framework. However, regarding the running time, it cannot be expected
to compete with techniques specially tuned for solving the bandwidth
minimization problem.

Let $f: V \rightarrow \{1, \dots, n\}$ be a bijection that defines a
linear vertex order. The \emph{bandwidth} of $G$ is defined as $bw(G)
= \min_f \max \{f(v) - f(u) \mid uv \in E \text{ and } f(u) < f(v)\}$,
i.e., the minimum length of the longest edge in $G$ over all possible
vertex orders.

We describe an ILP that assigns the vertices of $G$ to disjoint grid
points  and requires for an integer $k$ that any pair of adjacent
vertices is at most $k$ grid points apart. If the ILP has a solution,
then we know $bw(G) \le k$. We set $\mathcal R = [1, n]$ and add the
following two constraints:
\begin{align}
  \label{eq:bw:onlyone} \sum_{v \in V} x_i(v) &\le 1 && \forall i \in
\mathcal R\\
  \label{eq:bw:dist} x_i(u) &\le \sum_{j=i-k}^{i+k} x_j(v) && \forall
uv \in E\,  \forall i \in \mathcal R
\end{align}
Constraint~(\ref{eq:bw:onlyone}) guarantees that no grid point is
occupied by more than one vertex, and constraint~(\ref{eq:bw:dist})
requires that any two adjacent vertices are at most $k$ grid points
apart.
We note that the variables $b_i(v)$ and $e_i(v)$ and their constraints
are not required in this model since in $\mathcal R = [1,n]$
constraints~(\ref{eq:nonempty}) and~(\ref{eq:bw:onlyone}) suffice to
set exactly one variable $x_i(v)=1$ for every $v \in V$.
We do not need an objective function but rather test if the feasible
region is non-empty for a given $k\ge 1$.

\begin{theorem}
There exists an ILP/SAT formulation with $O(n^2)$ variables
and $O(n \cdot m)$ constraints / $O(n^3)$ clauses of maximum size $n$
that has a solution if and only if
$G$ has bandwidth $\leq k$. 
\end{theorem}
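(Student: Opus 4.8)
The plan is to establish the claimed bounds by directly counting the variables, constraints, and clauses in the bandwidth model, and then to verify the biconditional between feasibility and $bw(G) \le k$.

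First I would count the ILP size. The model starts from $\mathcal B(1)$ on the grid $\mathcal R = [1,n]$. By the remark following the model, the $b_i(v)$ and $e_i(v)$ variables (and constraints~\eqref{eq:oneb},~\eqref{eq:onee},~\eqref{eq:start},~\eqref{eq:end}) can be dropped, so the only $\mathcal B(1)$-variables that remain are $x_i(v)$ for $v \in V$ and $i$ in the extended grid $\bar{\mathcal R} = [0,n+1]$, together with the margin constraints~\eqref{eq:margin} and the non-emptiness constraints~\eqref{eq:nonempty}. That is $O(n^2)$ variables and $O(n^2)$ constraints. Then constraint~\eqref{eq:bw:onlyone} adds $n$ constraints, and constraint~\eqref{eq:bw:dist} adds one constraint per edge and per grid point, i.e.\ $O(n \cdot m)$ constraints. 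There are no new variables beyond those of $\mathcal B(1)$. Since $m = O(n^2)$, the total is $O(n^2)$ variables and $O(n \cdot m)$ constraints (the $O(n^2)$ from $\mathcal B(1)$ being absorbed), as claimed.

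Next I would count the SAT clauses. Using the translation rules of Section~\ref{sec:satmodel}: constraint~\eqref{eq:nonempty} is type-(i) (actually a pure disjunction), giving $O(n)$ clauses of size $O(n)$; constraint~\eqref{eq:margin} fixes $O(n)$ variables, each a unit clause; constraint~\eqref{eq:bw:onlyone} is type-(ii) with $c=1$, giving $\binom{n}{2} = O(n^2)$ clauses per grid point, hence $O(n^3)$ binary clauses in total; and constraint~\eqref{eq:bw:dist} rewrites as $x_i(u) \le \sum_{j=i-k}^{i+k} x_j(v)$, which is type-(i) and yields one clause of size $O(k) = O(n)$ per edge per grid point, i.e.\ $O(n m) = O(n^3)$ clauses. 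Summing, we get $O(n^3)$ clauses of maximum size $n$.

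Finally I would prove correctness of the biconditional. For the forward direction, given a feasible assignment, constraint~\eqref{eq:nonempty} gives at least one $x_i(v)=1$ for each $v$ and constraint~\eqref{eq:bw:onlyone} gives at most one per grid point; combined with there being $n$ vertices and $n$ grid points, each $v$ gets exactly one grid point, defining an injection $f\colon V \to \{1,\dots,n\}$, which is therefore a bijection. Constraint~\eqref{eq:bw:dist} applied at $i = f(u)$ forces $f(v) \in \{f(u)-k,\dots,f(u)+k\}$ for every edge $uv$, so $|f(u)-f(v)| \le k$ for all $uv \in E$, whence $bw(G) \le k$. Conversely, given an ordering $f$ witnessing $bw(G) \le k$, set $x_{f(v)}(v)=1$ and all other $x$-variables to $0$; one checks directly that~\eqref{eq:margin},~\eqref{eq:nonempty},~\eqref{eq:bw:onlyone} hold, and~\eqref{eq:bw:dist} holds because the only way its left side is $1$ is when $i=f(u)$, in which case $f(v)$ lies within $k$ of $f(u)$ so the right side is $1$ as well. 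I expect no serious obstacle here; the only mild subtlety is being careful that dropping the $b,e$-variables really is justified — i.e.\ that~\eqref{eq:nonempty} together with~\eqref{eq:bw:onlyone} and the pigeonhole count genuinely pins down a single $x_i(v)=1$ per vertex, which is exactly the argument just given.
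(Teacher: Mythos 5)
The proposal is correct and follows essentially the same approach as the paper, which states the theorem without a formal proof but whose surrounding discussion (constraints~\eqref{eq:bw:onlyone} and~\eqref{eq:bw:dist}, the remark that the $b_i(v)$, $e_i(v)$ variables can be dropped because~\eqref{eq:nonempty} and~\eqref{eq:bw:onlyone} pin down exactly one occupied grid point per vertex, and the standard type-(i)/(ii) clause translations) is precisely the counting and pigeonhole argument you spell out. Your write-up is simply a more explicit version of the paper's reasoning, with no substantive deviation.
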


\subsection{Optimum $st$-orientation}

Let $G$ be an undirected graph and let $s$ and $t$ be two vertices of
$G$ with $st \in E$.  An {\em $st$-orientation} of $G$ is an
orientation of the edges such
that $s$ is the unique source and $t$ is the unique sink~\cite{ET76}.  Such an
orientation can exist only if $G$ is biconnected. Computing an
$st$-orientation can be done in linear time~\cite{ET76,b-eo-02}, but
it is NP-complete to find an $st$-orientation that minimizes the
length of the longest path from $s$ to $t$, even for planar
graphs~\cite{SZ10}. It has many
applications in graph drawing~\cite{pt-apsto-10} and beyond.

\begin{problem}[Optimum $st$-orientation]\label{pb:st}
	Given a graph $G=(V,E)$ and two vertices $s,t \in V$ with $st \in E$, find an orientation of $E$ such that $s$ is the only source, $t$ is the only sink, and the length of the longest directed path from $s$ to $t$ is minimum.
\end{problem}

We now formulate an ILP that computes a \emph{height-$k$
$st$-orientation} of $G$, i.e., an $st$-orientation such that the
longest path has length at most $k$ (if one exists).
We use the ILP $\mathcal B(1)$ for $\mathcal R = [1,k]$ to assign
intervals to the vertices and edges of $G$. Vertices are required to
occupy exactly one point, whereas edges must span at least two points.
The additional constraints are as follows:
\begin{align}
  \label{eq:st:one} \sum_{i \in \mathcal R} x_{i}(v) &\le 1 &&&&
\forall v \in V\\
  \label{eq:st:edge} \sum_{i \in \mathcal R} x_i(e) &\ge 2 &&&&
\forall e \in E\\
  \label{eq:st:beginend} b_i(uv) &\le x_i(u) + x_i(v), & e_i(uv)&\le
x_i(u) + x_i(v) && \forall i \in \mathcal R\, \forall uv \in E \\
  \label{eq:st:out} x_i(v) &\le \sum_{vw \in E} b_i(vw) &&&& \forall i
\in \mathcal R \, \forall v \in V\setminus\{t\}\\
  \label{eq:st:in} x_i(v) &\le \sum_{vw \in E} e_i(vw) &&&& \forall i
\in \mathcal R \, \forall v \in V\setminus\{s\}
\end{align}
Similarly to the ILP in Section~\ref{sec:bandwidth}, the variables
$b_i(v)$ and $e_i(v)$ and their constraints are not required for the
vertices, since constraints~(\ref{eq:nonempty})
and~(\ref{eq:st:one}) ensure that every vertex interval consists of a
single grid point. Constraint~(\ref{eq:st:edge}) guarantees that each
edge interval contains at least two points.
Constraint~(\ref{eq:st:beginend}) makes sure that every edge must
begin and end at the grid points occupied by its end vertices.
Finally, constraints~(\ref{eq:st:out}) and~(\ref{eq:st:in}) ensure
that every vertex except $t$ has an outgoing edge and every vertex
except $s$ has an incoming edge. Since $\mathcal R = [1,k]$ and
adjacent vertices cannot occupy the same grid point, we know that if
this ILP has a feasible solution, then the longest directed path from
$s$ to $t$ has length at most $k$. Alternatively, we can set $k=n$ and
use the objective function $\min \sum_{i=1}^n i\cdot x_i(t)$ to
minimize the position of the sink $t$ and thus the longest path from
$s$ to $t$.

\begin{theorem}
There exists an ILP with $O(n(n+m))$ variables
and constraints that computes an optimum $st$-orientation. 
Alternatively, there exists an ILP/SAT formulation with $O(k(n+m))$
variables and $O(k(n+m))$ constraints / $O(k^2(n+m))$ clauses of
maximum size~$n$ that has a solution if and only if a height-$k$
$st$-orientation of~$G$ exists. 
\end{theorem}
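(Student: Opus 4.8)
The plan is to prove the ``alternatively'' part (the fixed-height ILP/SAT is feasible iff a height-$k$ $st$-orientation exists) by two explicit transformations, deduce the optimization statement from them, and then do the counting. Throughout, Lemma~\ref{lem:correctbox} guarantees that in any feasible assignment every vertex and every edge of $G$ is represented by a non-empty interval of $\mathcal R=[1,k]$, and by~\eqref{eq:nonempty} and~\eqref{eq:st:one} each vertex $v$ occupies a single point~$p(v)$. For the first transformation, start from a feasible solution: for an edge $uv$, constraint~\eqref{eq:st:edge} forces its interval to have at least two points and~\eqref{eq:st:beginend} forces both ends of that interval to lie in $\{p(u),p(v)\}$, so $p(u)\neq p(v)$ and the interval is $[\min\{p(u),p(v)\},\max\{p(u),p(v)\}]$; orient the edge from its lower- to its higher-position endpoint. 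The resulting orientation is acyclic, since positions strictly increase along any directed path. By~\eqref{eq:st:out}, every $v\neq t$ is incident to an edge whose interval \emph{begins} at $p(v)$, hence to an out-edge, so $t$ is the only possible sink; symmetrically~\eqref{eq:st:in} makes $s$ the only possible source. A finite nonempty acyclic digraph has at least one source and one sink, so $s$ is the unique source and $t$ the unique sink; moreover, following edges backward (resp.\ forward) from any vertex reaches $s$ (resp.\ $t$), so every vertex lies on a directed $s$--$t$ path. A longest directed $s$--$t$ path visits strictly increasing positions in $[1,k]$, so it has length at most $k$ (the exact offset depends only on whether one counts vertices or edges of the path, and can be normalised away by using $\mathcal R=[1,k+1]$).

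Conversely, given an $st$-orientation whose longest directed $s$--$t$ path has length at most $k$, set $\ell(v)=1+(\text{length of a longest directed path from }s\text{ to }v)$; acyclicity makes this well defined, $\ell(s)=1$, $\ell(u)<\ell(v)$ for every arc $u\to v$, and all values lie in $[1,k]$ (up to the same offset). Put $x_{\ell(v)}(v)=1$, represent every arc $u\to v$ by the interval $[\ell(u),\ell(v)]$ with $b_{\ell(u)}(uv)=e_{\ell(v)}(uv)=1$, and set all other variables to $0$. By Lemma~\ref{lem:correctbox} this is a valid box assignment, \eqref{eq:st:one}, \eqref{eq:st:edge} and~\eqref{eq:st:beginend} hold by construction, \eqref{eq:st:out} holds because every $v\neq t$ has an out-arc $v\to w$ whose interval begins at $\ell(v)=p(v)$, and likewise \eqref{eq:st:in} holds at every $v\neq s$; so the instance is feasible. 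For the optimization ILP ($\mathcal R=[1,n]$, objective $\min\sum_{i=1}^n i\cdot x_i(t)$), every $st$-orientation has longest $s$--$t$ path of length at most $n-1$ (it is simple), so the feasible region is nonempty exactly when an $st$-orientation exists; combining the two transformations, the minimum of $p(t)=\sum_i i\cdot x_i(t)$ over feasible solutions equals one plus the minimum height over all $st$-orientations, and it is attained by the orientation read off an optimal solution.

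For the sizes: $\mathcal B(1)$ on $[1,k]$ contributes $O(k)$ variables and $O(k)$ constraints per vertex and per edge, and~\eqref{eq:st:one}--\eqref{eq:st:in} add $O(k)$ more per object, for $O(k(n+m))$ variables and constraints in all (and $O(n(n+m))$ when $k=n$). Under the SAT translation of Section~\ref{sec:satmodel}, the ``$\le1$''/``$=1$'' constraints~\eqref{eq:st:one}, \eqref{eq:oneb} and~\eqref{eq:onee} are of type~(ii)/(iv) and give $O(k^2)$ clauses each, hence $O(k^2(n+m))$ in total, while all remaining (type-(i) or type-(iii)) constraints give only $O(k(n+m))$ clauses; no clause is longer than the largest sum involved, which ranges over $O(n)$ variables (e.g.\ \eqref{eq:st:out} and~\eqref{eq:st:in} yield clauses of length $\deg(v)+1\le n$), so the maximum clause size is~$n$.

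The one genuinely delicate step is verifying, in the first transformation, that the ``lower-to-higher'' orientation really is an $st$-orientation — specifically, combining acyclicity with the out-/in-edge guarantees of~\eqref{eq:st:out} and~\eqref{eq:st:in} to conclude that $s$ is forced to \emph{be} a source (and hence the unique source), not merely the only candidate, and symmetrically for $t$. Everything else — the backward construction, the objective argument, and the counting — is routine, up to the harmless $\pm1$ between a grid of width $k$ and a path of length $k$.
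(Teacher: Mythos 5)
Your proof is correct and follows essentially the same approach as the paper, which only sketches the argument informally after stating constraints~(\ref{eq:st:one})--(\ref{eq:st:in}); you fill in the two directions (reading off the lower-to-higher orientation from vertex positions, and constructing a feasible assignment from longest-path levels) plus the counting. Your observation that acyclicity together with the out-/in-edge constraints forces $s$ and $t$ to actually be the unique source and sink is a genuine gap in the paper's informal discussion that you correctly close, and your handling of the $\pm 1$ offset in ``length'' is a fair resolution of the paper's own imprecision.
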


\section{Higher-Dimensional Problems}\label{sec:higherdim}
In this section we give examples of two-dimensional visibility graph representations and a $d$-dimensional grid-based graph representation problem. Let again $G=(V,E)$ be an undirected graph with $|V|=n$ and $|E|=m$.

\subsection{Visibility representations}

A visibility representation (also: \emph{bar visibility representation} or \emph{weak visibility representation}) of a graph $G=(V,E)$ maps all vertices to disjoint horizontal line segments, called \emph{bars}, and all edges to disjoint vertical bars, such that for each edge $uv \in E$ the bar of $uv$ has its endpoints on the bars for $u$ and $v$ and does not intersect any other vertex bar. Visibility representations are an important visualization concept in graph drawing, e.g., it is well known that a graph is planar if and only if it has a visibility representation~\cite{Wis85,TT86}.
An interesting recent extension are bar $k$-visibility representations~\cite{deglst-bkvg-07}, which additionally allow edges to intersect at most $k$ non-incident vertex bars. We use our ILP to compute compact visibility and bar $k$-visibility representations. Minimizing the area of a visibility representation is NP-hard~\cite{le-tardhpg-03} and we are not aware of any implemented exact algorithms to solve the problem for any $k\ge 0$. 
By Lemma~\ref{lem:transformToInteger} we
know that all bars can be described with integer coordinates of size $O(m+n)$.

\begin{problem}[Bar $k$-Visibility Representation]\label{pb:barvis}
	Given a graph $G$, an integer grid of size $H\times W$, and an integer $k\ge 0$, find a bar $k$-visibility representation on the given grid (if one exists).
\end{problem}

\paragraph{Bar visibility representations.}
Our goal is to test whether $G$ has a visibility representation in a grid with $H$ columns and $W$ rows (and thus minimize $H$ or $W$). We set $\mathcal R^2 = [1,H]\times[1,W]$ and use ILP $\mathcal B(2)$ to create grid boxes for all edges and vertices in $G$. We add one more set of binary variables for vertex-edge incidences and the following constraints.
\begin{align}
	\label{eq:vi:inci} x_\bvec{i}(e,v) &\in \{0,1\} &&&& \forall \bvec{i} \in \mathcal R^2\, \forall e \in E\, \forall v \in e\\
  \label{eq:vi:vhor} b_i^1(v) &= e_i^1(v) &&&& \forall i \in [1,U_1]\, \forall v \in V\\
  \label{eq:vi:ever} b_i^2(e) &= e_i^2(e) &&&& \forall i \in [1,U_2]\, \forall e \in E\\
  \label{eq:vi:disj} \sum_{v \in V} x_\bvec{i}(v) &\le 1 &&&& \forall \bvec{i} \in \mathcal R^2\\
\label{eq:vi:edge} \sum_{v \in V\setminus e} x_\bvec{i}(v) &\le (1-x_\bvec{i}(e)) &&&& \forall \bvec{i} \in \mathcal R^2\, \forall e \in E
\end{align}
\begin{align}
  \label{eq:vi:inci2} x_\bvec{i}(e,v) &\le x_\bvec{i}(e), & x_\bvec{i}(e,v) &\le x_\bvec{i}(v) && \forall \bvec{i} \in \mathcal R^2\, \forall e \in E\, \forall v \in e\\
  \label{eq:vi:inci3} \sum_{\bvec{i} \in \mathcal R^2} x_\bvec{i}(e,v) &\ge 1 &&&& \forall e \in E\, \forall v \in e\\
  \label{eq:vi:sten} x_\bvec{i}(e,v) &\le b_{\bvec{i}[1]}^1(e) + e_{\bvec{i}[1]}^1(e) &&&& \forall \bvec{i} \in \mathcal R^2\, \forall e \in E\, \forall v \in e
\end{align}
Constraints~(\ref{eq:vi:vhor}) and~(\ref{eq:vi:ever}) ensure that 
all vertex boxes are horizontal bars of height 1 
and all edge boxes are vertical bars of width 1. 
Constraint~(\ref{eq:vi:disj}) forces the vertex boxes to be disjoint; 
edge boxes will be implicitly disjoint due to the remaining constraints. 
No edge is allowed to intersect a non-incident vertex 
due to constraint~(\ref{eq:vi:edge}). 
Finally, we need to set the new incidence variables $x_\bvec{i}(e,v)$ 
for an edge $e$ and an incident vertex $v$ so that $x_\bvec{i}(e,v) = 1$ 
if and only if $e$ and $v$ share the grid point $\bvec{i}$. 
Constraints~(\ref{eq:vi:inci2}) and~(\ref{eq:vi:inci3}) ensure that 
each incidence in $G$ is realized in at least one grid point, 
but it must be one that is used by the boxes of $e$ and $v$. 
Finally, constraint~(\ref{eq:vi:sten}) requires edge $e$ 
to start and end at its two intersection points 
with the incident vertex boxes. 
This constraint is optional, 
but yields a tighter formulation.%
To transform constraint~(\ref{eq:vi:edge}) into SAT,
we add the clause $\neg x_\bvec{i}(e) \vee \neg x_\bvec{i}(v)$
for each $\bvec{i} \in \mathcal R^2$, $e \in E$, $v \in V \setminus e$, and 
apply the general transformation rules otherwise.

Since every graph with a visibility representation is planar (and vice versa) we have $m \in
O(n)$. Moreover, our ILP and SAT models can also be used to test planarity of a
given graph by setting $H= n$ and $W= 2n-4$, which is sufficient due to Tamassia and Tollis~\cite{TT86}. This might not look interesting at first sight since planarity testing can be done in linear time~\cite{planarityTest}. However, we think that this is still useful as one can add other constraints to the ILP model, e.g., 
to create simultaneous planar embeddings, and use it as a subroutine for ILP formulations of applied graph drawing problems
such as metro maps~\cite{nw-dlhqm-11} and cartograms.

\begin{theorem}\label{thm:barvis}
	There exists an ILP/SAT formulation with $O(HWn)$ variables and
$O(HWn)$ constraints / $O (H W n^2)$ clauses of maximum size $HW$ 
that solves Problem~\ref{pb:barvis} for $k=0$. 
\end{theorem}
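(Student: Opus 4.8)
The plan is to verify the three quantities in Theorem~\ref{thm:barvis} --- number of variables, number of constraints, and number of SAT clauses together with their maximum size --- by going through the ILP $\mathcal B(2)$ from Lemma~\ref{lem:correctbox} together with the problem-specific additions (\ref{eq:vi:inci})--(\ref{eq:vi:sten}), counting contributions group by group. Throughout we use that a visibility representation forces $G$ planar, so $m \in O(n)$, which lets us replace $(n+m)$ by $O(n)$ wherever edges are indexed; this is what collapses the bounds into the stated form.

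First I would count the variables. The base ILP $\mathcal B(2)$ on the grid $\bar{\mathcal R}^2$ of size $(H+2)(W+2)$ contributes $O(HW)$ variables $x_\bvec{i}(\cdot)$ per object and $O(H+W)$ start/end variables $b_i^k,e_i^k$ per object; with $n$ vertex objects and $m = O(n)$ edge objects this is $O(HWn)$. The incidence variables $x_\bvec{i}(e,v)$ from~(\ref{eq:vi:inci}) number $|\mathcal R^2|$ times the number of (edge, incident vertex) pairs, i.e.\ $O(HW \cdot m) = O(HWn)$ since $\sum_{e}|e| = 2m = O(n)$. So the variable count is $O(HWn)$ as claimed. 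Next the constraints: each of (\ref{eq:vi:vhor})--(\ref{eq:vi:sten}) is a family indexed either by a grid point (and possibly an edge or vertex) or by a coordinate; the dominant families (\ref{eq:vi:edge}), (\ref{eq:vi:inci2}), (\ref{eq:vi:sten}) are indexed by $\mathcal R^2 \times E \times \{v \in e\}$, again $O(HW n)$, and the base-ILP constraints (\ref{eq:margin})--(\ref{eq:end}) contribute $O(HWn)$ as well (each grid point, each dimension, each object). Thus $O(HWn)$ constraints.

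For the SAT clause count I would apply the translation rules from Section~\ref{sec:satmodel}. Most constraints are of types (i)--(iv) with small constant $c$ (namely $c \in \{0,1\}$ in (\ref{eq:vi:disj}), (\ref{eq:vi:edge}), (\ref{eq:vi:inci2}), (\ref{eq:vi:inci3}), (\ref{eq:vi:sten}), and the ``$=1$'' constraints (\ref{eq:oneb}),(\ref{eq:onee}),(\ref{eq:vi:vhor}),(\ref{eq:vi:ever}) become type-(iv) with $c=1$); for these each ILP constraint yields $O(1)$ clauses except the type-(iii)/(iv) part of a ``$\sum y_i = 1$'' constraint, which yields $\binom{k}{k-c+1}=\binom{k}{k}=1$ extra clause of size $k \le HW$. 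The one genuinely non-constant case is constraint~(\ref{eq:vi:edge}), which sums over all $v \in V\setminus e$; rather than use the generic type-(ii) rule (which would blow up), the text explicitly replaces it by the pairwise clauses $\neg x_\bvec{i}(e)\vee\neg x_\bvec{i}(v)$, one per $(\bvec{i},e,v)$ triple, i.e.\ $O(HW n m) = O(HWn^2)$ clauses of size $2$. Similarly (\ref{eq:vi:disj}) as a type-(ii) constraint with $c=1$ on $n$ variables gives $\binom n2 = O(n^2)$ clauses per grid point, i.e.\ $O(HWn^2)$ clauses. These two dominate, giving $O(HWn^2)$ clauses total; the largest clause sizes come from the type-(i) expansions of the non-empty/margin/start-end constraints, which have up to $HW$ literals (the $\sum_\bvec{i} x_\bvec{i} \ge 1$ clause), hence maximum size $HW$.

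Finally I would observe correctness, i.e.\ that a feasible solution of value ``exists'' corresponds exactly to a bar $0$-visibility (= ordinary visibility) representation on the $H\times W$ grid: Lemma~\ref{lem:correctbox} guarantees each object is a nonempty grid box; (\ref{eq:vi:vhor}),(\ref{eq:vi:ever}) make vertex boxes height-$1$ bars and edge boxes width-$1$ bars; (\ref{eq:vi:disj}) makes vertex bars disjoint; (\ref{eq:vi:edge}) forbids an edge bar from touching a non-incident vertex bar (which, combined with each edge bar running between its two endpoint bars, also forces edge bars to be pairwise disjoint); and (\ref{eq:vi:inci2})--(\ref{eq:vi:sten}) pin the endpoints of each edge bar onto its two incident vertex bars. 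Conversely any such representation, made integral via Lemma~\ref{lem:transformToInteger}, yields a feasible assignment. The main obstacle --- and the only place requiring care rather than bookkeeping --- is (\ref{eq:vi:edge}): one must check both that the explicit SAT replacement is logically equivalent to the ILP inequality and that implicit disjointness of edge bars really follows (an edge bar spans a full column segment between its endpoints, so two overlapping edge bars would force one to intersect the other's endpoint vertex bar, contradicting (\ref{eq:vi:edge})); everything else is a direct application of Lemmas~\ref{lem:transformToInteger} and~\ref{lem:correctbox} plus the counting rules of Section~\ref{sec:satmodel}.
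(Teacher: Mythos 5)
Your proposal is correct and follows essentially the same route as the paper, which states Theorem~\ref{thm:barvis} as a summary of the construction (\ref{eq:vi:inci})--(\ref{eq:vi:sten}) together with the counting implicit in the surrounding text; in particular you correctly identify that the $O(HWn^2)$ clause bound is driven by the special pairwise SAT translation of constraint~(\ref{eq:vi:edge}) and by the type-(ii) expansion of~(\ref{eq:vi:disj}), with the maximum clause size $HW$ coming from the non-emptiness and incidence-realization constraints. Your elaboration of why edge bars are implicitly pairwise disjoint under~(\ref{eq:vi:edge}) is a welcome addition, as the paper merely asserts this.
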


\paragraph{Bar $k$-visibility representations.}
It is easy to extend our previous model for $k=0$ to test bar
$k$-visibility representations for $k\ge 1$. We drop
constraint~(\ref{eq:vi:edge}), introduce another set of binary
variables to indicate intersections between edges and non-incident
vertices, and add the following constraints.
\begin{align}
  \label{eq:kvi:vars} y_\bvec{i}(e) &\in \{0,1\} && \forall \bvec{i}
\in \mathcal R^2\, \forall e \in E\\
  \label{eq:kvi:cross} x_\bvec{i}(e) + \sum_{v \in V\setminus e}
x_\bvec{i}(v) &\le y_\bvec{i}(e)+1 && \forall \bvec{i} \in \mathcal
R^2\, \forall e \in E\\
  \label{eq:kvi:bark} \sum_{\bvec{i} \in \mathcal R^2} y_\bvec{i}(e)
&\le k && \forall e \in E
\end{align}
\vspace{-2ex}
\begin{equation}
    \label{eq:kvi:disj} x_\bvec{i}(e) + x_\bvec{i}(e') - \frac{1}{2}
(b_{\bvec{i}[1]}^1(e) + e_{\bvec{i}[1]}^1(e) + b_{\bvec{i}[1]}^1(e') +
e_{\bvec{i}[1]}^1(e')) \le 1 \, \forall \bvec{i} \in \mathcal R^2\,
\forall (e,e') \in \binom{E}{2}
\end{equation}
The variable $y_\bvec{i}(e)$ is supposed to equal 1 if and only if $e$
intersects a non-incident vertex at position $\bvec{i}$.
Constraint~(\ref{eq:kvi:cross}) enforces that $y_\bvec{i}(e)=1$ if
grid point $\bvec{i}$ is occupied by $e$ and a non-incident vertex and
constraint~(\ref{eq:kvi:bark}) makes sure that no more than $k$ such
non-incident bars are crossed by each edge. Finally,
constraint~(\ref{eq:kvi:disj}) guarantees that any two edge bars are
disjoint, except for the case that they are incident to the same
vertex bar and meet at a common endpoint. (Alternatively, we could
enforce disjointness for all edge bars if we required vertex bars of
height~$2$.)

To transform constraint~(\ref{eq:kvi:cross}) into SAT,
we add clause $\neg x_\bvec{i}(e) \vee \neg x_\bvec{i}(v) \vee
y_\bvec{i}(e)$
for each $\bvec{i} \in \mathcal R^2$, $e \in E$, $v \notin e$.
For~(\ref{eq:kvi:disj}), we add
$\neg x_\bvec{i}(e) \vee \neg x_\bvec{i}(e') \vee b_{\bvec{i}[1]}^1(e)
\vee e_{\bvec{i}[1]}^1(e)$
for each ordered pair $e,e' \in E$, $e \neq e'$.
For the rest of the constraints, we apply the general transformation
rules.

\begin{theorem}\label{thm:barkvis}
	There exists an ILP/SAT formulation with $O(HW(n+m))$ variables and
$O(HW(m^2+n))$ constraints / $O(\binom{HW}{k+1} m + HW m^2)$ clauses
of maximum size $HW$ that solves Problem~\ref{pb:barvis} for $k\ge 1$.
\end{theorem}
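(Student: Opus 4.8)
The plan is to show (a) that the modified program is a correct model --- it is satisfiable exactly when $G$ admits a bar $k$-visibility representation on the prescribed $H\times W$ grid, and a satisfying assignment decodes to one --- and (b) that the claimed numbers of variables, constraints, and clauses hold. The program differs from the $k=0$ model of Theorem~\ref{thm:barvis} only by three local changes: deleting constraint~(\ref{eq:vi:edge}), adding the indicator variables $y_\bvec i(e)$ together with~(\ref{eq:kvi:vars})--(\ref{eq:kvi:bark}), and adding the pairwise edge-disjointness constraint~(\ref{eq:kvi:disj}). Everything the $k=0$ analysis already establishes carries over unchanged --- $\mathcal B(2)$ and Lemma~\ref{lem:correctbox} produce grid boxes, (\ref{eq:vi:vhor})--(\ref{eq:vi:ever}) turn them into height-$1$ horizontal vertex bars and width-$1$ vertical edge bars, (\ref{eq:vi:disj}) keeps vertex bars disjoint, and~(\ref{eq:vi:inci3}) together with~(\ref{eq:vi:sten}) attach each edge bar to the two bars of its incident vertices (so every edge bar is nontrivial, having two distinct endpoints in distinct rows) --- so I only have to analyse the three changes.

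For the forward direction I would take a bar $k$-visibility representation of $G$ on the given grid (by Lemma~\ref{lem:transformToInteger} one may assume the bars are closed, integer-coordinate boxes), set $\mathcal B(2)$, the incidence variables, and the inherited constraints exactly as in the $k=0$ proof, and then put $y_\bvec i(e)=1$ at precisely those grid points where the bar of $e$ meets a non-incident vertex bar. Since an edge bar has width $1$, a vertex bar has height $1$, and distinct vertex bars are disjoint, every non-incident vertex bar crossed by $e$ accounts for exactly one such grid point; thus $\sum_{\bvec i}y_\bvec i(e)$ equals the number of non-incident crossings of $e$, which is at most $k$, so~(\ref{eq:kvi:cross})--(\ref{eq:kvi:bark}) are satisfied. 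Constraint~(\ref{eq:kvi:disj}) is satisfied because in a bar visibility representation two distinct edge bars can meet only in a common endpoint that lies on a shared incident vertex bar, and at such a point $b^1(e)+e^1(e)\ge 1$ and $b^1(e')+e^1(e')\ge 1$, so the subtracted term is at least $1$ and the left side is at most $1$.

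For the reverse direction I would take a satisfying assignment, recover the vertex bars and the incidence attachments as in the $k=0$ proof, and then check the two $k$-visibility conditions directly. First, at any grid point $\bvec i$ occupied by an edge $e$ and a non-incident vertex $v$ the left side of~(\ref{eq:kvi:cross}) equals $2$ (at most one vertex can sit on $\bvec i$ by~(\ref{eq:vi:disj})), which forces $y_\bvec i(e)=1$; hence the number of such grid points --- which, by the width/height/disjointness argument, is exactly the number of non-incident bars crossed by $e$ --- is bounded by $\sum_{\bvec i}y_\bvec i(e)\le k$. Second, if two distinct edge bars shared a point $\bvec i$ with $x_\bvec i(e)=x_\bvec i(e')=1$, then~(\ref{eq:kvi:disj}) forces $b^1_{\bvec i[1]}(e)+e^1_{\bvec i[1]}(e)+b^1_{\bvec i[1]}(e')+e^1_{\bvec i[1]}(e')\ge 2$; since each of the two edges contributes at most $1$ here (its endpoints lie in distinct rows), $\bvec i$ must be an endpoint of both bars, i.e.\ the only overlap left is the endpoint-touching degeneracy that is permitted in a bar visibility representation (and that disappears under the height-$2$ variant noted after~(\ref{eq:kvi:disj})). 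Hence the decoded bars form a valid bar $k$-visibility representation.

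For the counts I would tally block by block. $\mathcal B(2)$ contributes $O(HW)$ point variables and $O(H+W)$ start/end variables per object over $n+m$ objects, and the incidence variables and the new $y_\bvec i(e)$ add $O(HWm)$ each, for $O(HW(n+m))$ variables; the inherited constraint blocks give $O(HW(n+m))$ constraints, (\ref{eq:kvi:cross})--(\ref{eq:kvi:bark}) give $O(HWm)$, and~(\ref{eq:kvi:disj}) gives $O(HWm^2)$, for $O(HW(m^2+n))$ in total. For the SAT clause count, the cardinality constraint~(\ref{eq:kvi:bark}) contributes $\binom{HW}{k+1}$ clauses per edge, i.e.\ $O(\binom{HW}{k+1}m)$; the manual translation of~(\ref{eq:kvi:disj}) adds $O(HWm^2)$ clauses; and --- assuming w.l.o.g.\ that $G$ has no isolated vertices, so $n=O(m)$, and using that $\binom{HW}{k+1}=\Omega\big((HW)^2\big)$ for $k\ge 1$ dominates the polynomial grid factors of the remaining blocks (the inherited $\mathcal B(2)$ clauses, the size-$3$ clauses of~(\ref{eq:kvi:cross}), etc.) --- everything else is absorbed, giving $O(\binom{HW}{k+1}m+HWm^2)$ clauses of maximum size $HW$. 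I expect the reverse direction to be the main obstacle: one has to be careful that $\sum_{\bvec i}y_\bvec i(e)$ counts the non-incident vertex bars crossed by $e$ exactly rather than merely bounding it, and that the slack permitted by~(\ref{eq:kvi:disj}) is precisely the legal endpoint-touching degeneracy, so that the object read off from a satisfying assignment is a genuine bar $k$-visibility representation and not a near-miss.
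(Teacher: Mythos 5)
Your proposal is correct and takes essentially the same route as the paper: the paper gives no separate formal proof of Theorem~\ref{thm:barkvis}, relying on the three local modifications to the $k=0$ model plus implicit counting, and your bidirectional correctness check (forcing $y_{\bvec i}(e)=1$ exactly at non-incident crossing points, and reading the slack in~(\ref{eq:kvi:disj}) as the permitted endpoint-touching degeneracy) together with your block-by-block tallies matches that construction. Your explicit observation that the inherited $O(HWn^2)$ vertex-disjointness clauses are absorbed only under the harmless assumption $n=O(m)$ is a point the paper glosses over, but it does not change the approach.
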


\subsection{Boxicity-$d$ graphs}\label{sec:boxicity}
A graph is said to have {\em boxicity} $d$ if it can be represented
as intersection graph of $d$-dimensional axis-aligned boxes. Testing
whether a graph has boxicity $d$ is NP-hard, even for
$d=2$~\cite{Kra94}. We are not aware of any implemented algorithms to
determine the boxicity of a graph. By
Lemma~\ref{lem:transformToInteger} we can restrict ourselves to a grid
of side length $n$.

We use ILP $\mathcal B(d)$ for $\mathcal R^d = [1,n]^d$ to model a
$d$-dimensional box for each vertex of a graph $G=(V,E)$. We add the
following variables and constraints to achieve the correct
intersection properties.
\begin{align}
  \label{eq:bx:var} x_\bvec{i}(e) &\in \{0,1\} &&&& \forall \bvec{i}
\in \mathcal R^d\, \forall e \in E\\
  \label{eq:bx:nonempty} \sum_{\bvec{i} \in \mathcal R^d}
x_\bvec{i}(e) &\ge 1 &&&& \forall e \in E\\
  \label{eq:bx:vtcs} x_\bvec{i}(uv) &\le x_\bvec{i}(u), &
x_\bvec{i}(uv) &\le x_\bvec{i}(v) && \forall \bvec{i} \in \mathcal
R^d\, \forall uv \in E\\
  \label{eq:bx:disj} x_\bvec{i}(u) + x_\bvec{i}(v) &\le 1 &&&& \forall
\bvec{i} \in \mathcal R^d\, \forall uv \in \binom{V}{2} \setminus E
\end{align}
The variables $x_\bvec{i}(e)$ indicate whether a grid point lies in
the intersection of two boxes and thus represents an edge.
Constraint~(\ref{eq:bx:nonempty}) guarantees that there is a non-empty
intersection for each edge $e \in E$, and by
constraint~(\ref{eq:bx:vtcs}) we make sure that the intersection
indicator for an edge $uv$ can only be set to $1$ at position
$\bvec{i}$ if the grid boxes for $u$ and $v$ both occupy point
$\bvec{i}$. Finally, constraint~(\ref{eq:bx:disj}) enforces that no
grid point can be occupied by a pair of non-adjacent vertices.

\begin{theorem}
There exists an ILP with $O(n^d (n+m))$ variables and
$O(n^{d+2})$ constraints 
as well as a SAT instance with $O(n^d (n+m))$ variables and
$O(n^{d+2})$ clauses of maximum size $O(n^d)$
to test whether a graph $G$ has boxicity $d$.
\end{theorem}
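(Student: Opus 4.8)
The plan is to prove correctness and the size bounds separately, following the same two-directional template used in the proof of Lemma~\ref{lem:correctbox} and in the pathwidth section. For the variable and constraint counts: the base ILP $\mathcal B(d)$ on $\mathcal R^d=[1,n]^d$ contributes $O(n^d)$ $x$-variables and $O(n)$ begin/end variables per dimension per vertex, i.e.\ $O(n^d \cdot n) = O(n^{d+1})$ variables and $O(n^d \cdot d) = O(n^d)$ constraints per vertex, hence $O(n^{d+1})$ variables and $O(n^{d+1})$ constraints over all $n$ vertices; together with the $O(n^d)$ edge-indicator variables $x_\bvec{i}(e)$ per edge from~(\ref{eq:bx:var}) this is $O(n^d(n+m))$ variables. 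The constraints from~(\ref{eq:bx:nonempty})--(\ref{eq:bx:vtcs}) add $O(n^d \cdot m)$, and the disjointness constraint~(\ref{eq:bx:disj}) adds one per grid point per non-edge, i.e.\ $O(n^d \cdot n^2) = O(n^{d+2})$, which dominates. The SAT translation via Section~\ref{sec:satmodel} turns each constraint into $O(1)$ clauses of the stated form (type-(i) constraints like~(\ref{eq:bx:vtcs}),~(\ref{eq:bx:disj}),~(\ref{eq:bx:nonempty}) become single clauses; the $\mathcal B(d)$ type-(iv) constraints~(\ref{eq:oneb})--(\ref{eq:onee}) with $c=1$ contribute $O(n^2)$ clauses each but there are only $O(dn)$ of them per vertex), so the clause count matches the constraint count up to the $O(n^{d+2})$ bound, and the maximum clause size is governed by the type-(iv) expansion of~(\ref{eq:nonempty}), which has $O(n^d)$ terms.

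For correctness I would argue both implications. If $G$ has boxicity $d$, take a box representation; by Lemma~\ref{lem:transformToInteger} we may assume all $n$ boxes are closed with integer coordinates in $\{1,\dots,n\}^d$ and that two boxes intersect in this integral representation iff they did originally. Set $x_\bvec{i}(v)=1$ exactly for grid points $\bvec{i}$ inside the box of $v$; by Lemma~\ref{lem:correctbox} this, together with the induced begin/end variables, satisfies $\mathcal B(d)$. For each edge $uv\in E$ the boxes of $u,v$ intersect, so there is some grid point $\bvec{i}$ with $x_\bvec{i}(u)=x_\bvec{i}(v)=1$; set $x_\bvec{i}(uv)=1$ there (and $0$ elsewhere), satisfying~(\ref{eq:bx:nonempty}) and~(\ref{eq:bx:vtcs}). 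For each non-edge $uv$ the boxes are disjoint, so no grid point has both $x_\bvec{i}(u)=x_\bvec{i}(v)=1$, satisfying~(\ref{eq:bx:disj}). Conversely, given any feasible assignment, Lemma~\ref{lem:correctbox} guarantees that for each vertex $v$ the set $R_v=\{\bvec{i}\mid x_\bvec{i}(v)=1\}$ is a non-empty grid box. Constraints~(\ref{eq:bx:nonempty}) and~(\ref{eq:bx:vtcs}) force, for each edge $uv$, some point $\bvec{i}$ with $x_\bvec{i}(uv)=1$ and hence $\bvec{i}\in R_u\cap R_v$, so adjacent vertices have intersecting boxes; constraint~(\ref{eq:bx:disj}) forces $R_u\cap R_v=\emptyset$ for non-adjacent $u,v$. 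Thus $\{R_v\}_{v\in V}$ is a representation of $G$ as an intersection graph of $d$-dimensional boxes, i.e.\ $G$ has boxicity at most $d$; combined with the existence direction, the ILP/SAT instance is feasible iff $G$ has boxicity $\le d$, which is what ``test whether $G$ has boxicity $d$'' means when run (as in the pathwidth and $st$-orientation cases) for the appropriate value.

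The one subtlety I would be careful about is that boxicity is usually defined as the \emph{minimum} $d$ for which such a representation exists, whereas the ILP directly tests ``boxicity $\le d$'': I would note that a graph of boxicity $d' < d$ also has a $d$-box representation (pad extra coordinates with a fixed unit interval), so feasibility of $\mathcal B(d)$-based instance is equivalent to boxicity $\le d$, and exact boxicity is then obtained by running for $d=1,2,\dots$ until the first feasible instance — mirroring the incremental strategy used for pathwidth in Section~\ref{sec:ilpmodel}. The main obstacle, such as it is, is bookkeeping: making sure the $O(n^{d+2})$ term genuinely dominates (it does, since $m = O(n^2)$ trivially and the non-edge disjointness constraint is the only one indexed by \emph{both} a grid point and a pair of vertices) and that the SAT clause sizes are read off correctly from the type-(ii)/(iv) expansions of the $\mathcal B(d)$ cardinality constraints rather than from the simple type-(i) clauses. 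None of this is deep; the real content is the clean correspondence between feasible assignments and box representations, which Lemma~\ref{lem:correctbox} already hands us.
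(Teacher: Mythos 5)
Your proposal is correct and follows essentially the same route as the paper: the paper itself only states the constraints~(\ref{eq:bx:var})--(\ref{eq:bx:disj}) with an informal justification, relying on Lemma~\ref{lem:transformToInteger} for the grid bound and Lemma~\ref{lem:correctbox} for the box semantics, and your two-directional correctness argument and counting (with the non-edge disjointness constraint dominating at $O(n^{d+2})$ and the type-(iii) expansion of the non-emptiness constraints giving the $O(n^d)$ maximum clause size) is exactly the intended formalization. The only nit is a garbled phrase in your variable count (``$O(n^{d+1})$ variables per vertex'' should read $O(n^d)$ per vertex, hence $O(n^{d+1})$ over all vertices), which does not affect the stated bounds.
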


\section{Experiments}\label{sec:exp}
We implemented and tested our formulation for minimizing pathwidth, bandwidth,
length of longest path in an $st$-orientation, and width of
bar-visibility and bar 1-visibility representations, as well as
deciding whether a graph has boxicity~2.
\par

The experiments were performed on a single core of an
AMD Opteron 6172 processor running Linux~3.4.11. The machine is
clocked at 2.1 Ghz, and has 256 GiB RAM. Our
implementation\footnote{available from
\url{http://i11www.iti.kit.edu/gdsat}} is written
in C++ and was compiled with GCC 4.7.1 using optimization
\texttt{-O3}. As test sample we used the \emph{Rome graphs} dataset
\cite{rome} which consists of 11533 graphs with vertex number between
10 and 100. $18\%$ of the Rome graphs are planar.
Figure~\ref{fig:size-distr} shows
the size distribution of the Rome graphs.

\begin{figure}[tbp]
\centering
  \includegraphics{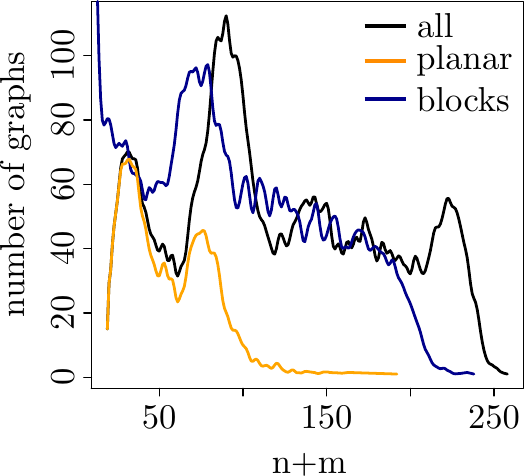}
 \caption{\small Size distribution of Rome graphs.
 Black: all graphs,
 blue: biconnected blocks ($n{\geq}3$),
 orange: planar graphs. We used an $(1,2,3,2,1)$-Gaussian filter to
reduce oscillations.  
 }
 \label{fig:size-distr}
\end{figure}

We initially used the \emph{Gurobi} solver \cite{gurobi} to
test the implementation of the ILP formulations, however it turned out
that even for very small graphs ($n<10$) solving a single instance can
take minutes. We therefore focused on the equivalent SAT
formulations gaining a significant speed-up. As SAT solver we used
\emph{MiniSat}~\cite{minisat} in version~2.2.0.  
For each of the five minimization problems we determined obvious lower and upper bounds in~$O(n)$ for the respective graph parameter. Starting with the lower bound we iteratively increased the parameter to the next integer until a solution was found (or a predefined timeout was exceeded). Each iteration consists of constructing the SAT formulation 
and executing the SAT solver. We measured the total time spent in all iterations.
For boxicity~2 we decided to consider square grids and minimize
their side lengths. Thus the same iterative procedure applies to boxicity~2.

Note that for all considered
problems a binary search-like procedure for the parameter value did not prove to be efficient, since the solver usually takes more time with increasing parameter value, which is mainly due to the increasing number of variables and
clauses.
For the one-dimensional problems we used a timeout of 300 seconds, for the two-dimensional problems of 600 seconds. 

We ran the instances sorted by size~$n+m$ starting with the smallest
graphs. If more than 400 consecutive graphs in this order produced
timeouts, we ended the experiment prematurely and evaluated only the
so far obtained results. Figures~\ref{fig:1d} and~\ref{fig:2d}
summarize our experimental results and show the percentage of Rome
graphs solved within the given time limit, as well as scatter plots
with each solved instance represented as a point depending on its
graph size and the required computation time. 

\begin{figure}[t]
 \subfloat[percentage of solved instances]{
 	\includegraphics{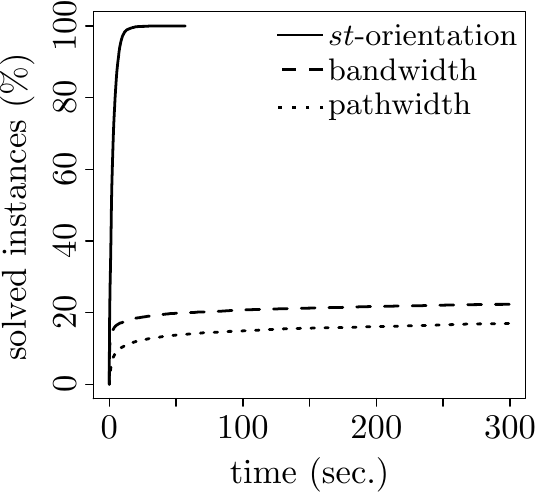}
 	\label{fig:solved}
 }
 \hfill
 \subfloat[pathwidth]{
 \includegraphics
 {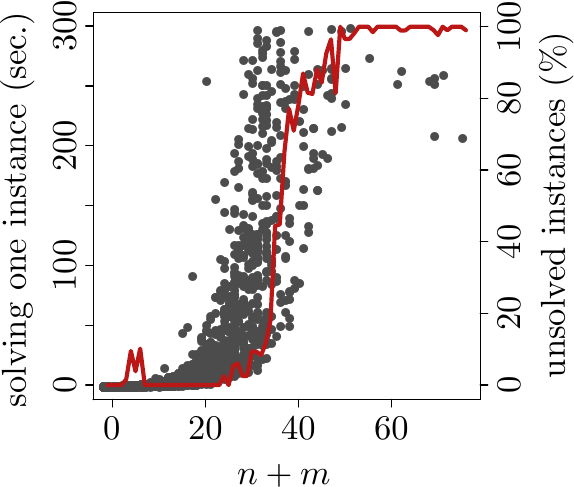}
 	\label{fig:pw:time}
 }\\
  \subfloat[bandwidth]{
 	\includegraphics{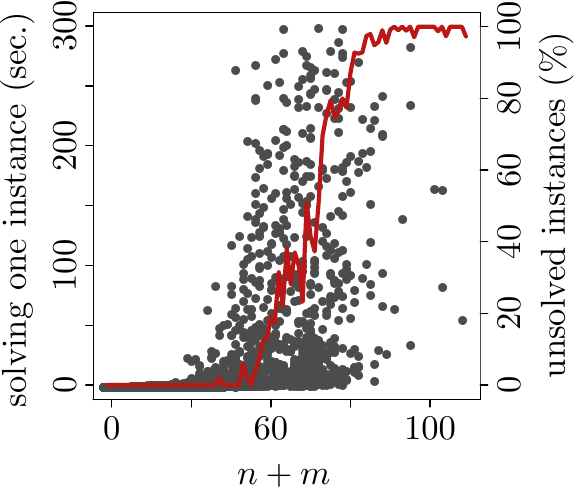}
 	\label{fig:bw:time}
 }
 \hfill
 \subfloat[$st$-orientation]{
 	\includegraphics{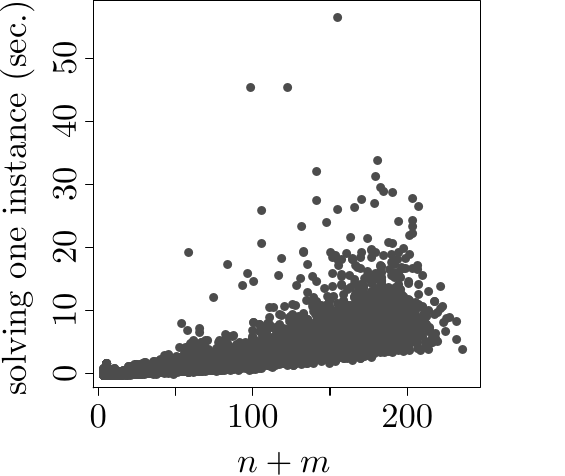}
 	\label{fig:st:time}
 }
 \caption{\small
  Experimental results for the one-dimensional problems.
  \protect\subref{fig:solved} Percentage of solved instances.
  \protect\subref{fig:pw:time}--\protect\subref{fig:st:time}: 
  Time in seconds for solving an instance (dots)
  and percentage of instances not solved within 300 seconds (red curves), 
  both in relation to~$n+m$ 
 }\label{fig:1d}
	\vspace{-2ex}
\end{figure}

\smallskip\noindent\textit{Pathwidth.}
As Fig.~\ref{fig:solved} shows,
we were able to compute the pathwidth for $17.0\%$ of all Rome graphs,
from which $82\%$ were solved within the first minute
and only $3\%$ within the last.
Therefore, we expect that a significant increase of the timeout value
would be necessary for a noticeable increase 
of the percentage of solved instances.
We note that almost all small graphs ($n+m<45$)
could be solved within the given timeout,
however, for larger graphs, the percentage of solved instances rapidly drops,
as the red curve in Fig.~\ref{fig:pw:time} shows.
Almost no graphs with $n+m>70$ were solved.

\smallskip\noindent\textit{Bandwidth.}
We were able to compute the bandwidth for $22.3\%$ of all Rome graphs
(see Fig.~\ref{fig:solved}),
from which $90\%$ were solved within the first minute
and only $1.3\%$ within the last.
Similarly to the previous case, the procedure 
terminated successfully within 300 seconds 
for almost all small graphs ($n+m<55$ in this case), 
while almost none of the larger graphs ($n+m>80$) were solved;
see the red curve in Fig.~\ref{fig:bw:time}.

\smallskip\noindent\textit{Optimum $st$-orientation.}
Note that very few of the Rome graphs are biconnected.
Therefore, to test our SAT implementation 
for computing the minimum number of levels in an $st$-orientation,
we subdivided each graph into biconnected blocks 
and removed those with~$n \leq 2$,
which produced 13606 blocks in total;
see Fig.~\ref{fig:size-distr} for the distribution of block sizes.
Then, for each such block, we randomly selected 
one pair of vertices~$s,t$, $s \neq t$, connected them by an edge if
it did not already exist and ran the iterative procedure.
In this way, for the respective choice of~$s,t$
we were able to compute the minimum number of levels 
in an $st$-orientation for all biconnected blocks; 
see Fig.~\ref{fig:solved}.
Moreover, no graph took longer than 57 seconds,
for $97\%$ of the graphs it took less than 10 seconds
and for $68\%$ less than 3 seconds.
Even for the biggest blocks with $m+n > 200$,
the procedure successfully terminated within 15 seconds 
in $93\%$ of the cases; see Fig.~\ref{fig:st:time}.

\begin{figure}[t]
 \subfloat[percentage of solved instances]{
 	\includegraphics{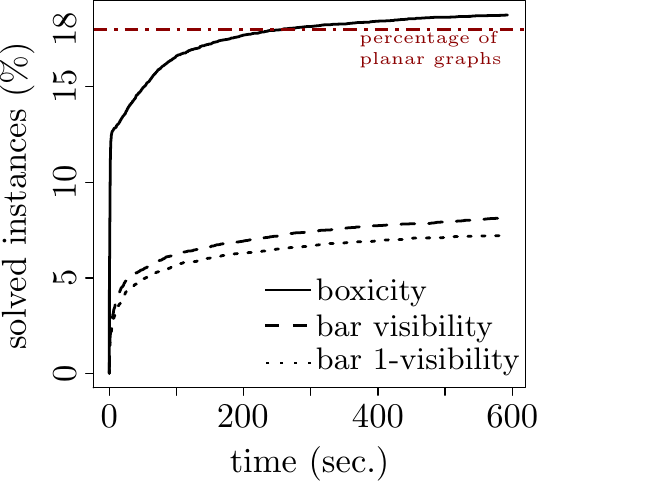}
 	\label{fig:visrep:solved}
 }
 \hfill
 \subfloat[bar visibility]{
\includegraphics{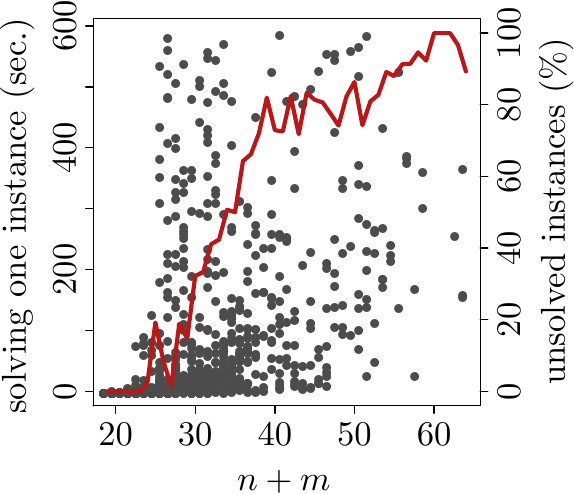}
 	\label{fig:visrep:time}
 }\\
 \subfloat[bar 1-visibility]{
 	\includegraphics{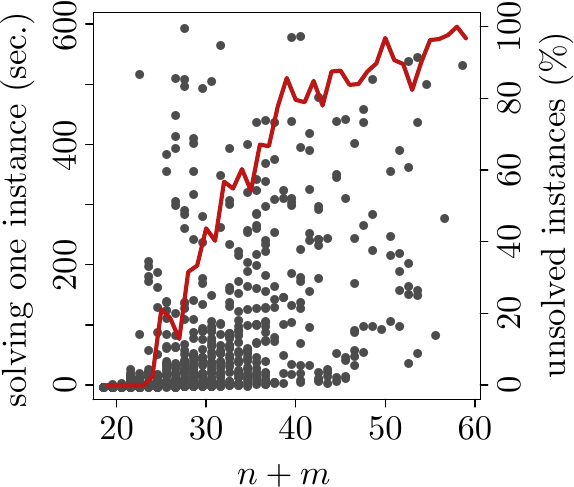}
 	\label{fig:1bar:time}
 }
 \hfill
  \subfloat[boxicity 2]{
 	\includegraphics{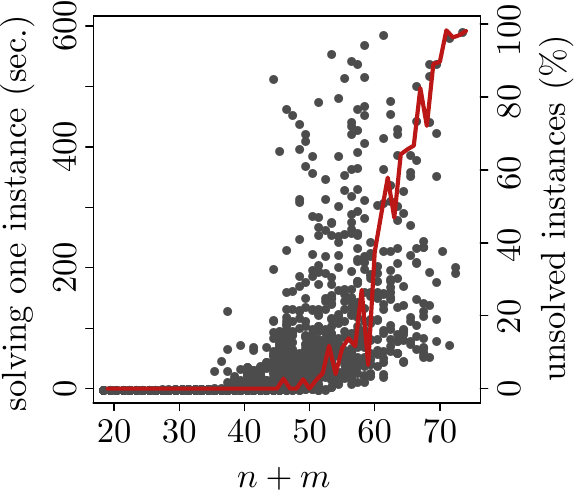}
 	\label{fig:box:time}
 }
 \caption{\small Experimental results for the two-dimensional problems.
  \protect\subref{fig:visrep:solved} 
   Percentage of solved instances. The red horizontal line shows the
   percentage of planar graphs over all Rome graphs.
  \protect\subref{fig:visrep:time}--\protect\subref{fig:box:time}:  
   Time in seconds for solving an instance (dots)
   and percentage of instances not solved within 600 seconds (red curves), 
   both in relation to~$n+m$.
 }\label{fig:2d}
	\vspace{-2ex}
\end{figure}

\smallskip\noindent\textit{Bar visibility.}
To compute bar-visibility representations of minimum width,
we iteratively tested for each graph all widths~$W$
between~1 and~n. We used the trivial upper bound $H=n$ for the height.
We were able to compute solutions
for~$28.5\%$ of all 3281 planar Rome graphs (see
Fig.~\ref{fig:visrep:solved}), $69\%$ of which were solved within the
first minute and less than~$0.1\%$ within the last.
We were able to solve all small instances with~$n+m \leq 23$
and almost none for~$n+m >55$; see the red curve in
Fig.~\ref{fig:visrep:time}.
\par

\smallskip\noindent\textit{Bar 1-visibility.} %
We also ran the width minimization procedure for 
bar 1-visibility representations on all Rome graphs. 
The number of graphs for which the procedure terminated successfully
within the given time is 833 ($7.2\%$ of all Rome graphs),
which is close to the corresponding number for bar-visibility; see
Fig.~\ref{fig:visrep:solved}. 
 This is not surprising, since most small Rome graphs are planar;
 see Fig.~\ref{fig:size-distr}. 
For bar 1-visibility, eight graphs were solved 
which were not solved for bar-visibility.
Interestingly, they were all planar.
All but 113 graphs successfully processed in the previous
experiment 
were also successfully processed in this one. 
A possible explanation for those~113 graphs is that the SAT formulation for
 bar 1-visibility requires more clauses.
All small graphs with $n+m \leq 23$ were processed successfully.
Interestingly, for none of the processed graphs
the minimum width actually decreased in comparison to their
minimum-width bar-visibility representation.

\smallskip\noindent\textit{Boxicity-2}
For testing boxicity~2,
we started with a~$3\times 3$ grid for each graph 
and then increased height and width simultaneously after each iteration.
Within the specified timeout of~600 seconds,
we were able to decide whether a graph has boxicity~2
for $18.7\%$ of all Rome graphs (see Fig.~\ref{fig:visrep:time}),
$82\%$ of which were processed within the first minute and~$0.3\%$
within the last.
All of the successfully processed graphs actually had boxicity~2.
Small graphs with $n+m \leq 50$ were processed almost completely,
while almost none of the graphs with $n+m> 70$ finished;
see Fig.~\ref{fig:box:time}. 
\section{Conclusion}
We presented a versatile ILP formulation for determining placement of
grid boxes according to problem-specific constraints. We gave six
examples of how to extend this formulation for solving numerous
NP-hard graph drawing and graph representation problems, such as
bar-visibility representations, computing the pathwidth and the
boxicity, and finding an st-orientation that minimizes the
longest directed path. Our experimental evaluation showed that while solving the original ILP is rather slow, the easily derived SAT formulations perform quite well for smaller graphs. While our approach is not suitable to replace specialized exact or heuristic algorithms that are faster and/or can solve larger instances of these problems, it does provide a simple-to-use tool for solving problems that can be modeled by grid-based graph representations without much implementation effort. This can be useful, e.g., for verifying counterexamples, NP-hardness gadgets, or for computing solutions to certain instances in practice.

We note that many other problems can easily be formulated as ILPs by assigning
grid-boxes to vertices or edges.  Among those are, e.g., testing
whether a planar graph has a straight-line drawing of height $h$,
testing whether a planar graph has a rectangular dual with
integer coordinates and prescribed integral areas, testing whether a graph is a $t$-interval graph, or whether a  bipartite graph can be represented as a planar bus graph. 
Important open problems are to reduce the complexity of our formulations and the question whether approximation algorithms for graph drawing can be derived from our model via fractional relaxation. 

{\small
  \bibliography{abbrv,ilp-gd}

\begin{thebibliography}{10}

\bibitem{rome}
Rome graphs.
\newblock \url{www.graphdrawing.org/download/rome-graphml.tgz}.

\bibitem{bhmw-hs-09}
A.~Biere, M.~Heule, H.~van Maaren, and T.~Walsh, editors.
\newblock {\em Handbook of Satisfiability}.
\newblock IOS Press, 2009.

\bibitem{bdln-odgwvel-05}
C.~Binucci, W.~Didimo, G.~Liotta, and M.~Nonato.
\newblock Orthogonal drawings of graphs with vertex and edge labels.
\newblock {\em Comput. Geom. Theory Appl.}, 32(2):71--114, 2005.

\bibitem{BodlaenderGKH91}
H.~L. Bodlaender, J.~R. Gilbert, T.~Kloks, and H.~Hafsteinsson.
\newblock Approximating treewidth, pathwidth, and minimum elimination tree
  height.
\newblock In {\em Proc. Workshop Graph-Theoretic Concepts Comput. Sci.
  (WG'91)}, volume 570 of {\em LNCS}, pages 1--12. Springer, 1992.

\bibitem{BodlaenderK96}
H.~L. Bodlaender and T.~Kloks.
\newblock Efficient and constructive algorithms for the pathwidth and treewidth
  of graphs.
\newblock {\em J. Algorithms}, 21(2):358--402, 1996.

\bibitem{b-eo-02}
U.~Brandes.
\newblock Eager $st$ ordering.
\newblock In {\em Europ. Symp. Algorithms (ESA'02)}, volume 2461 of {\em LNCS},
  pages 247--256. Springer, 2002.

\bibitem{bcegjk-bacnp-08}
C.~Buchheim, M.~Chimani, D.~Ebner, C.~Gutwenger, M.~J{\"u}nger, G.~W. Klau,
  P.~Mutzel, and R.~Weiskircher.
\newblock A branch-and-cut approach to the crossing number problem.
\newblock {\em Discrete Optimization}, 5(2):373--388, 2008.

\bibitem{cbd-aip-10}
D.-S. Chen, R.~G. Batson, and Y.~Dang.
\newblock {\em Applied Integer Programming}.
\newblock Wiley, 2010.

\bibitem{cmb-aecm-08}
M.~Chimani, P.~Mutzel, and I.~Bomze.
\newblock A new approach to exact crossing minimization.
\newblock In {\em Europ. Symp. Algorithms (ESA'08)}, volume 5193 of {\em LNCS},
  pages 284--296. Springer, 2008.

\bibitem{cz-upt-13}
M.~Chimani and R.~Zeranski.
\newblock Upward planarity testing via {SAT}.
\newblock In {\em Graph Drawing 2012}, volume 7704 of {\em LNCS}, pages
  248--259. Springer, 2013.

\bibitem{ccdg-bpgms-82}
P.~Z. Chinn, J.~Chv{\'a}talova, A.~K. Dewdney, and N.~E. Gibbs.
\newblock The bandwidth problem for graphs and matrices---a survey.
\newblock {\em J. Graph Theory}, 6(3):223--254, 1982.

\bibitem{deglst-bkvg-07}
A.~M. Dean, W.~Evans, E.~Gethner, J.~D. Laison, M.~A. Safari, and W.~T.
  Trotter.
\newblock Bar $k$-visibility graphs.
\newblock {\em J. Graph Algorithms Appl.}, 11(1):45--59, 2007.

\bibitem{bandwidth-exact}
G.~M. Del~Corso and G.~Manzini.
\newblock Finding exact solutions to the bandwidth minimization problem.
\newblock {\em Computing}, 62(3):189--203, 1999.

\bibitem{DFK+08}
V.~Dujmovic, M.~R. Fellows, M.~Kitching, G.~Liotta, C.~McCartin, N.~Nishimura,
  P.~Ragde, F.~A. Rosamond, S.~Whitesides, and D.~R. Wood.
\newblock On the parameterized complexity of layered graph drawing.
\newblock {\em Algorithmica}, 52(2):267--292, 2008.

\bibitem{minisat}
N.~E{\'e}n and N.~S{\"o}rensson.
\newblock An extensible {SAT}-solver.
\newblock In {\em Proc. Theory and Appl. of Satisfiability Testing (SAT'03)},
  volume 2919 of {\em LNCS}, pages 502--518. Springer, 2004.

\bibitem{ET76}
S.~Even and R.~E. Tarjan.
\newblock Computing an st-numbering.
\newblock {\em Theoret. Comput. Sci.}, 2(3):339--344, 1976.

\bibitem{FLW03}
S.~{Felsner}, G.~{Liotta}, and S.~{Wismath}.
\newblock Straight-line drawings on restricted integer grids in two and three
  dimensions.
\newblock {\em J. Graph Algorithms Appl.}, 7(4):363--398, 2003.

\bibitem{gsm-okpcm-11}
G.~Gange, P.~J. Stuckey, and K.~Marriott.
\newblock Optimal k-level planarization and crossing minimization.
\newblock In {\em Graph Drawing 2010}, volume 6502 of {\em LNCS}, pages
  238--249. Springer, 2011.

\bibitem{gurobi}
{Gurobi Optimization, Inc.}
\newblock Gurobi optimizer reference manual, 2013.

\bibitem{planarityTest}
J.~Hopcroft and R.~Tarjan.
\newblock Efficient planarity testing.
\newblock {\em J. ACM}, 21(4):549--568, Oct. 1974.

\bibitem{jm-2scmpeha-97}
M.~J{\"u}nger and P.~Mutzel.
\newblock 2-layer straightline crossing minimization: Performance of exact and
  heuristic algorithms.
\newblock {\em J. Graph Algorithms Appl.}, 1(1):1--25, 1997.

\bibitem{Kra94}
J.~Kratochv\'{\i}l.
\newblock A special planar satisfiability problem and a consequence of its
  np-completeness.
\newblock {\em Discrete Appl. Math.}, 52(3):233--252, Aug. 1994.

\bibitem{le-tardhpg-03}
X.~Lin and P.~Eades.
\newblock Towards area requirements for drawing hierarchically planar graphs.
\newblock {\em Theoret. Comput. Sci.}, 292(3):679--695, 2003.

\bibitem{mcp-bbambm-08}
R.~Mart{\'\i}, V.~Campos, and E.~Pi{\~n}ana.
\newblock A branch and bound algorithm for the matrix bandwidth minimization.
\newblock {\em Europ. J. of Operational Research}, 186:513--528, 2008.

\bibitem{nw-dlhqm-11}
M.~N{\"o}llenburg and A.~Wolff.
\newblock Drawing and labeling high-quality metro maps by mixed-integer
  programming.
\newblock {\em {IEEE} TVCG}, 17(5):626--641, 2011.

\bibitem{pt-apsto-10}
C.~{Papamanthou} and I.~G. {Tollis}.
\newblock Applications of parameterized st-orientations.
\newblock {\em J. Graph Algorithms Appl.}, 14(2):337--365, 2010.

\bibitem{SZ10}
S.~Sadasivam and H.~Zhang.
\newblock {NP}-completeness of $st$-orientations for plane graphs.
\newblock {\em Theoret. Comput. Sci.}, 411(7-9):995--1003, Feb. 2010.

\bibitem{TT86}
R.~Tamassia and I.~Tollis.
\newblock A unified approach to visibility representations of planar graphs.
\newblock {\em Discrete Comput. Geom.}, 1(1):321--341, 1986.

\bibitem{Wis85}
S.~K. Wismath.
\newblock Characterizing bar line-of-sight graphs.
\newblock In {\em Proc. first Ann. Symp. Comput. Geom.}, SCG '85, pages
  147--152, New York, NY, USA, 1985. ACM.

\end{thebibliography}
  \bibliographystyle{abbrv}
}

\end{document}